\newcommand*{\ARXIV}{}
\newcommand{\VEC}[1]{\boldsymbol{\mathbf{#1}}}
\pgfplotsset{compat=1.17}
\newtheorem{theorem}[]{Theorem}
\newtheorem{assumption}[]{Assumption}
\newtheorem{lemma}[]{Lemma}
\newtheorem{proposition}[]{Proposition}
\newcommand{\abs}[1]{\left \lvert {#1} \right \rvert}
\newcommand{\type}[1]{\mathrm{Type}\left({#1}\right)}
\newcommand{\norm}[1]{\left \lVert {#1} \right \rVert}
\newcommand{\eps}{\epsilon}
\newcommand{\Z}{\mathbb{Z}}
\newcommand{\R}{\mathbb{R}}
\newcommand{\E}{\mathbb{E}}
\newcommand{\inner}[2]{\left \langle #1, #2 \right \rangle}
\DeclareMathOperator*{\argmin}{arg\,min} 
\begin{document}

\title{Efficiently Computing Sparse Fourier Transforms of $q$-ary Functions}
\ifdefined\ARXIV
\author{\IEEEauthorblockN{Yigit Efe Erginbas$^*$, Justin Singh Kang$^*$, Amirali Aghazadeh, \\ Kannan Ramchandran}
}
\else
\author{\IEEEauthorblockN{Anonymous Authors}
  \IEEEauthorblockA{Please do NOT provide authors' names and affiliations\\
    in the paper submitted for review, but keep this placeholder.\\
    ISIT23 follows a \textbf{double-blind reviewing policy}.}
}
\fi

\def\thefootnote{*}\footnotetext{These authors contributed equally to this work.}
\maketitle

\begin{abstract}
\ifdefined\ARXIV
\else
THIS PAPER IS ELIGIBLE FOR THE STUDENT PAPER AWARD.
\fi Fourier transformations of pseudo-Boolean functions are popular tools for analyzing functions of binary sequences. Real-world functions often have structures that manifest in a sparse Fourier transform, and previous works have shown that under the assumption of sparsity the transform can be computed efficiently. But what if we want to compute the Fourier transform of functions defined over a \(q\)-ary alphabet? These types of functions arise naturally in many areas including biology. A typical workaround is to encode the \(q\)-ary sequence in binary, however, this approach is computationally inefficient and fundamentally incompatible with the existing sparse Fourier transform techniques. Herein, we develop a sparse Fourier transform algorithm specifically for \(q\)-ary functions of length \(n\) sequences, dubbed \(q\)-SFT, which provably computes an \(S\)-sparse transform with vanishing error as \(q^n \rightarrow \infty\) in \(O(Sn)\) function evaluations and \(O(S n^2 \log q)\) computations, where \(S = q^{n\delta}\) for some \(\delta < 1\). Under certain assumptions, we show that for fixed \(q\), a robust version of \(q\)-SFT has a sample complexity of \(O(Sn^2)\) and a computational complexity of \(O(Sn^3)\) with the same asymptotic guarantees. We present numerical simulations on synthetic and real-world RNA data, demonstrating the scalability of \(q\)-SFT to massively high dimensional \(q\)-ary functions.

\end{abstract}

\section{Introduction}
Pseudo-Boolean functions \cite{Odonnell2021} are powerful tools for modeling systems across many fields including computer science \cite{bockmayr1991logic}, biology \cite{Kargupta2001} and game theory \cite{ Marichal2000}. \
In many of these applications, the Fourier transform, also known as the Walsh-Hadamard transform, plays an important role. In particular, a Pseudo-Boolean function $f : \Z_2^n \rightarrow \R$ can be represented as a real, multilinear polynomial in terms of its Fourier transform $F$, i.e.,
\begin{equation}
    f[\mathbf{m}] = \sum_{\mathbf{k} \in \Z_2^n} F[\mathbf{k}](-1)^{\inner{\mathbf{m}}{\mathbf{k}}}, \;\;\mathbf{m} \in \Z_2^n, 
\end{equation}
where $N = 2^n$ and $\Z_2=\{0,1\}$. The transform $F$ measures interactions between subsets of coordinates in ${\bf k}$ and is often more interpretable than $f$. In fact, Fourier techniques are now being used for explaining deep neural networks in real-world problems \cite{aghazadeh2021epistatic,ha2021adaptive}. It is known that $F$ can be computed in $O(N\log{N})$ computations using the seminal fast-Fourier transform (FFT) algorithm. 
This exponential dependence in $n$ makes computing $F$ impractical for even moderate values of $n$. In most applications, however, $F$ has some exploitable low-dimensional structure. By far the most commonly considered structure is sparsity of the Fourier transform~\cite{Mansour1994}. Compressed sensing theory shows that algorithms like LASSO~\cite{tibshirani1996regression} achieve the optimal sample complexity of $O(S\log(N/S)) $ in recovering an $S$-sparse Fourier transform $F$ (i.e., a function with only $S$ non-zero values). 
More recent \emph{sparse Fourier transform} algorithms~\cite{stobbe2012, Scheibler2013, Li2015, Amrollahi2019} demonstrate that the sparsity in $F$ can be exploited even further to significantly reduce the computational complexity of Fourier transforms. 
In particular, it is shown that a Boolean $S$-sparse Fourier transform can be computed in only $O(S\log^2N) = O(Sn^2)$ time complexity~\cite{Li2015}.

Though Pseudo-Boolean functions are a complete model for functions of binary sequences, in practice real data may involve $q$-ary sequences like DNA or proteins, which are defined over a grammar in a $q$-ary 
alphabet with $q>2$: $q=4$ for DNA, $q=20$ for proteins, and $q=4^\ell$ for the $\ell$-mer representation of DNA. To address the large and important class of problems with $q >2$, here we introduce \hyperref[alg]{$q$-SFT} for computing an $S$-sparse $q$-ary Fourier transform:
\begin{equation}\label{eq:q_fourier_transform}
    F[\mathbf{k}] = \frac{1}{N}\sum_{\mathbf{m} \in \Z_q^n} f[\mathbf{m}]\omega^{-\inner{\mathbf{m}}{\mathbf{k}}}, \;\;\mathbf{k} \in \Z_q^n,
    \vspace{-2pt}
\end{equation}
where $\omega := e^{j \frac{2\pi}{q}}$, $N = q^n$, $\mathbb{Z}_q$ denotes the ring of integers modulo $q$, and $\mathbb{Z}_q^n$ denotes the module of dimension $n$ over $\mathbb{Z}_q$. \hyperref[alg]{$q$-SFT} has a sample complexity of $O(Sn)$ and a computational complexity of $O(S n^2 \log q)$ in the noiseless case. In the presence of noise, for any fixed $q \geq 2$ \hyperref[alg]{$q$-SFT} has a sample complexity of $O(Sn^2)$ in $O(Sn^3)$ computations. In both cases, this is a significant improvement over the existing approaches.
We achieve this result via subsampling over affine spaces of the module $\Z_q^n$ and peeling-based erasure decoding \cite{Luby2001}.

Our approach to design a solution specific to $q$-ary sequences is a sharp departure from the common workaround in machine learning:
\emph{One-hot encoding} converts a $q$-ary sequence of length $n$ to a interpretable binary sequence of length $qn$ \cite{Gelman2021}. The downside of this approach is that the sequence length grows linearly in $q$, and thus the computational complexity of algorithms built on this approach is also at least linear in $q$. Furthermore, since the encoding is injective and non-surjective, there are many length $qn$ binary sequences that do not correspond to a $q$-ary sequence. For example, the all $0$s sequence is a valid $qn$ binary sequence but has no corresponding $q$-ary sequence. This \emph{sampling incompatibility} over linear spaces is fundamentally at odds with most sparse Fourier algorithms. Other approaches, such as binary coding of a $q$-ary sequence to one of length $\left \lceil \log_2 q\right \rceil n$, suffer a similar sampling incompatibility issue when $q$ is not a power of $2$, besides loosing the interpretability of one-hot encoding.

The potential wide range of applications for efficient $q$-ary Fourier transforms is a key motivation of this work. For example, in biological applications $F$ is used to identify which amino acids, or specific subsets of amino acids are responsible for certain functions \cite{aghazadeh2021epistatic, brookes2022sparsity}, but even in these cases computation remains a bottleneck.
In addition to this manuscript, we also release the associated software\footnote{\url{https://github.com/basics-lab/qsft}}, which will enable computation of $q$-ary Fourier transforms at scales never before seen.

\subsection{Contribution}
This paper studies the sparse Fourier transform of a $q$-ary function. We summarize the main contributions below:
\begin{compactitem}
\item We develop \hyperref[alg]{$q$-SFT}, the first algorithm for computing the Fourier transform of a $q$-ary function with sample complexity $O(S n)$ and computational complexity $(Sn^2 \log q)$ in the  presence of no additive noise. 
    \item  Noise robustness is critical for practical use, thus we also present a noise-robust version of $q$-SFT that for fixed $q > 1$ has sample complexity of $O(S n)$ at a cost of $O(nq^n)$ computational complexity. Alternatively, by increasing the sample complexity to $O(Sn^2)$, we show a computational complexity of $O(S n^3)$ can be achieved. 
\item Using systematic simulations on sparse $q$-ary functions, we show that \hyperref[alg]{$q$-SFT} performs well and is significantly more efficient than LASSO, allowing for computation at a massive scale.
    \item Numerical experiments also show \hyperref[alg]{$q$-SFT} performs well on real world data where the assumptions related to sparsity may not exactly hold.
\end{compactitem}

\subsection{Related Work}
At their core, sparse Fourier algorithms are built on the principles of sampling and aliasing in signal processing. In \cite{Li2015, Amrollahi2019}, which deal with the case of $q=2$, a pseudo-Boolean function is subsampled multiple times, according to an affine subsampling pattern. This leads to destructive aliasing. However, since the Fourier transform is sparse, and the affine subsampling method makes the aliasing predictable, a peeling decoding technique can be used to undo the aliasing and recover the original function.
Our algorithm in Section \ref{sec:noiseless} and \ref{sec:noisy} relies on similar techniques and shows that similar guarantees can be derived for the case of $q > 2$.
In \cite{Amrollahi2019} the case where $f$ has degree at most $t$ is also considered, and a sample complexity of $O(tS\log n)$ is achieved. 
In \cite{Ocal2019} the bounded degree problem is connected to perfect codes, however, the proposed algorithm is most efficient when Fourier coefficients are dense up to $t$th order.
A combinatorial approach to computing sparse Fourier transforms can be found in \cite{Choi2011}. 

There are other works that consider sparse Fourier transforms in other settings. Of note, \cite{Hassanieh2012, Pawar2018} both offer practical approaches for computing the sparse DFT of a signal. Furthermore, since the Fourier transform is a linear operation, computing the sparse Fourier transform may also be viewed as a linear inverse problem, and the existing approaches to \emph{compressed sensing} \cite{Donoho2006, candes2006stable} and \emph{group testing} \cite{Aldridge2019} are also relevant. In particular, LASSO \cite{tibshirani1996regression} and AMP \cite{donoho2009message, beck2009fast} can be used to solve for $F$ in \eqref{eq:q_fourier_transform}, but the computational complexity in both cases is exponential in $n$. 
We also note that decoding cyclic Reed-Solomon codes \cite{Blahut1983} is related to sparse Fourier transforms over finite fields.

\section{Problem Statement}

This work addresses the problem of computing the $S$-sparse Fourier transform $F$ of a $q$-ary function $f$ as in \eqref{eq:q_fourier_transform}. We consider a oracle model that takes in an index $\mathbf{m} \in \Z_q^n$, and outputs a function evaluation:
\begin{equation}\label{eq:sampling_model}
    \textrm{Oracle} : \mathbf{m} \longrightarrow f[\mathbf{m}] + v,\;\; v \sim \mathcal{CN}(0, \sigma^2),
\end{equation}
where $v$ is sampled i.i.d. from a complex Gaussian distribution (i.e., noise) for each query. \emph{Sample Complexity} denotes the number of oracle queries. The following assumptions are used to establish our theoretical guarantees.
\begin{assumption}\label{ass:sparse}
Let $F : \mathbb{Z}_q^n \to \mathbb{C}$ be a Fourier transform with support $\mathcal{S} := \mathrm{supp}(F)$. To facilitate our analysis throughout this paper, we make the following assumptions:
\begin{compactenum}
    \item Each element in the support set $\mathcal{S}$ is chosen independently and uniformly at random from $\mathbb{Z}_q^n$.
    \item The sparsity $S = |\mathcal{S}| = O(N^\delta)$ is sub-linear in $N$ for some $0 < \delta < 1$.
\end{compactenum}
\end{assumption}
\begin{assumption}\label{ass:sparse2}
\hfill
\begin{compactenum}
    \item Each coefficient $F[{\bf k}]$ for ${\bf k} \in \mathcal{S}$ is chosen from a finite set $\mathcal{X} := \{\rho, \rho \phi, \rho \phi^2, \dots, \rho \phi^{\kappa - 1} \}$ uniformly at random, where $\phi = e^{j 2 \pi / \kappa}$ and $\kappa$ is a constant.
    \item The signal-to-noise ratio (SNR) is defined as
    \vspace{-2pt}
    \begin{equation}
        \mathrm{SNR} = \frac{\|f\|^2}{N \sigma^2} = \frac{\|F\|^2}{ \sigma^2} = \frac{S \rho^2}{\sigma^2}
    \end{equation}
    \vspace{-2pt}
    and is assumed to be an arbitrary constant value, i.e. $\rho$ scales with $\sqrt{1/S}$.    
\end{compactenum}
\end{assumption}
\vspace{-6pt} \section{Algorithm: Noiseless}\label{sec:noiseless}
We begin by treating the case where samples of $f$ are obtained as in \eqref{eq:sampling_model} with $\sigma^2 = 0$. Algorithm~\ref{alg} describes the complete procedure, divided into three parts. 
In the first part, the function is subsampled, and the transform of the subsampled signal is computed.
The key innovation to deal with $q>2$ is to consider subsampling over an affine space in the module $\Z^n_q$. We show that if we subsample in this way the aliasing patterns are described by an affine function over the module.
In the second part, the subsampled transforms are processed, and singletons are identified by using the complex-valued transform coefficients to create a linear equation for a nonzero coefficient $\mathbf{k}$, again over the module. In 
the final phase, a peeling decoding process enables us to create more singletons, by subtracting the value of the originally identified singletons. Under assumptions, Theorem \ref{thm:noiseless} provides convergence guarantees. 
\begin{algorithm}
\caption{$q$-SFT}\label{alg}
\begin{algorithmic}[1]
\Require $P$, $C$, $b < n$, $\mathbf{M}_c \in \Z_q^{n \times b}$, $\mathbf{D}_c \in \Z_q^{P \times n}$ for $c \in [C]$.
\State $\hat{F}[\VEC{k}] \gets 0\;\; \forall \VEC{k}$
\For{$c \in [C]$} \Comment{Sub-Sampling Phase}
    \For{$p \in [P]$}
        \State Compute $U_{c,p}[\mathbf{j}]\;\;\forall \VEC{j} \in \Z_q^b$ \EndFor
    \State Compute $\type{\VEC{U}_{c}[\VEC{j}]} \;\;\forall \VEC{j} \in \Z_q^b$ \EndFor 
\State $\mathcal{K} = \emptyset$; $\mathcal{S} = \left\{(c, \VEC{j}, \VEC{k}, v) : \type{\VEC{U}_{c}[\VEC{j}]} = \mathcal{H}_{S}(\VEC{k}, v)\right\}$ \While{$\abs{\mathcal{S}} > 0$} \Comment{Peeling Phase}
\For{$(c, \VEC{j}, \VEC{k}, v) \in \mathcal{S}$, with $\VEC{k} \notin \mathcal{K}$}
    \State $\hat{F}[\VEC{k}] \gets v$;  $\mathcal{K} \gets \mathcal{K}\cup\{\VEC{k}\}$  
\For{$c' \in [C]$}
     \State ${\VEC{U}_{c'}[\VEC{M}_{c'}^\textrm{T} \VEC{k}]} \gets {\VEC{U}_{c'}[\VEC{M}_{c'}^\textrm{T} \VEC{k}]} - \hat{F}[\VEC{k}]\omega^{ \VEC{D}_{c'} \VEC{k} }$
     \State Recompute $\type{\VEC{U}_{c'}[\VEC{j}]}$
\EndFor
\EndFor
\State Update $\mathcal{S}$
\EndWhile
\Ensure Fourier Transform Estimate $\hat{F}$
\end{algorithmic}
\end{algorithm}
\subsection{Subsampling and Aliasing}
We construct $C$ subsampling groups, each characterized by a subsampling matrix $\VEC{M}_c \in \mathbb{Z}_q^{n \times b}$ defining a linear space and a set of $P$ offsets $\VEC{d}_{c,p} \in \mathbb{Z}_q^{n}$ for $p \in [P]$ for a total of $PB$ oracle queries with $B := q^b$. For each $c,p$ we compute:
\begin{equation}\label{eq:subsampled_fourier}
    U_{c,p}[\VEC{j}] = \frac{1}{B} \sum_{\boldsymbol\ell \in \mathbb{Z}_q^b} f[\VEC{M}_c \boldsymbol\ell + \VEC{d}_{c,p}] \omega^{-\langle \VEC{j}, \boldsymbol\ell \rangle},
\end{equation}
for all $\VEC{j} \in \mathbb{Z}_q^{b}$.
 In Appendix \ref{apdx:aliasing_proof}, we show that
\begin{equation}\label{eq:subsambpled_binned}
    U_{c,p}[\VEC{j}] = \sum_{\VEC{k}\; :\; \VEC{M}_c^\textrm{T} \VEC{k} = \VEC{j}} F[\VEC{k}] \omega^{\langle \VEC{d}_{c,p}, \VEC{k} \rangle},
\end{equation}
and thus the coefficients $U_{c,p}[\VEC{j}]$ are aliased versions of the coefficients $F[\VEC{k}]$.
It can be observed that the aliasing pattern is invariant with respect to the offsets $\VEC{d}_{c,p}$ used in subsampling. Therefore, we can group the observations according to their aliasing pattern to write $\VEC{U}_{c}[\VEC{j}] = [\dots, U_{c,p}[\VEC{j}], \dots]^{\textrm{T}}$
by stacking the $\VEC{j}$th coefficient associated with all the offsets. Then the aliasing pattern for the coefficients can be written as
\begin{equation}\label{eq:aliasing_pattern}
    \VEC{U}_{c}[\VEC{j}] = \sum_{\VEC{k} \;:\; \VEC{M}_c^\textrm{T} \VEC{k} = \VEC{j}} F[\VEC{k}] \omega^{ \VEC{D}_{c} \VEC{k} },
\ifdefined\ARXIV
\else
    \vspace{-5pt}
\fi
\end{equation}
where $\omega^{(\cdot)}$ is the element-wise exponentiation operator.

\subsection{Bin Detection}\label{subsec:bin_detect}

Each observation vector $\VEC{U}_{c}[\VEC{j}]$ in \eqref{eq:aliasing_pattern} is a linear combination of the unknown coefficients. The goal of bin detection is to identify which $\VEC{U}_{c}[\VEC{j}]$ correspond to singleton bins, and for what value $\VEC{k}$ that singleton corresponds to. To do so, define 
\begin{enumerate}
    \item $\type{\VEC{U}_{c}[\VEC{j}]} = \mathcal{H}_{Z}$ denotes a \emph{zero-ton}, for which there does not exist $F[\VEC{k}] \neq 0$ such that $ \VEC{M}_c^\textrm{T} \VEC{k} = \VEC{j}$.
    \item $\type{\VEC{U}_{c}[\VEC{j}]} = \mathcal{H}_{S}(\VEC{k}, F[\VEC{k}])$ denotes a \emph{singleton} with only one $\VEC{k}$ with $F[\VEC{k}] \neq 0$ such that $\VEC{M}_c^\textrm{T} \VEC{k} = \VEC{j}$.
    \item $\type{\VEC{U}_{c}[\VEC{j}]} = \mathcal{H}_{M}$ denotes a \emph{multi-ton} for which there exists more than one $F[\VEC{k}] \neq 0$ such that $ \VEC{M}_c^\textrm{T} \VEC{k} = \VEC{j}$.
\end{enumerate}
We use the following rule to compute  the bin type:
\begin{equation}\label{eq:noiseless_type}
\def\stackalignment{l}
   \type{\VEC{U}_{c}[\VEC{j}]} =\begin{cases}
        \mathcal{H}_{Z}, & \hspace{-2pt}\VEC{U}_{c}[\VEC{j}] = \VEC{0} \\
        \mathcal{H}_{M}, &\hspace{-2pt} \exists p \in [P] :\abs{\frac{U_{c,p}[\VEC{j}]}{ U_{c,1}[\VEC{j}]}} \neq 1 \\
        \mathcal{H}_{S}(\VEC{k}, X[\VEC{k}]), &\hspace{-2pt}\text{else.}
    \end{cases}
\end{equation}
When $\VEC{U}_{c}[\VEC{j}]$ is a singleton, we must determine the values $(\VEC{k}, F[\VEC{k}])$. 
Note that each singleton bin satisfies $U_{c,p}[\VEC{j}] = F[\VEC{k}] \omega^{\langle \VEC{d}_{c,p}, \VEC{k} \rangle}$ and hence we need to get rid of $F[\VEC{k}]$ to identify $\VEC{k}$. To achieve this, in addition to the offset matrix $\VEC{D}_c$, we choose a fixed delay $\VEC{d}_{c,0} = \VEC{0}$ such that $U_{c,0}[\VEC{j}] = F[\VEC{k}]$. Using these observations, we can write the following set of linear equations in $\VEC{k}$:
\begin{equation}\label{eq:singleton_index_recovery}
    \begin{bmatrix}
    \text{arg}_q[U_{c,1}[\VEC{j}]/U_{c,0}[\VEC{j}]]\\
\vdots\\
    \text{arg}_q[U_{c,P}[\VEC{j}]/U_{c,0}[\VEC{j}]]\\
    \end{bmatrix} = 
    \VEC{D}_c \VEC{k}.
\end{equation}
where $\text{arg}_q : \mathbb{C} \to \mathbb{Z}_q$ is the $q$-quantization of the argument of a complex number defined as  
\begin{equation}
    \text{arg}_q(z) := \left\lfloor \frac{q}{2 \pi}\text{arg}(z e^{j\pi/q})\right\rfloor.
\end{equation}
If we choose $\VEC{D}_c = \VEC{I}_n$, then $\VEC{k}$ can be obtained directly from \eqref{eq:singleton_index_recovery} and the value of the coefficient can be obtained as $F[\VEC{k}] = U_{c,0}[\VEC{j}]$. If we have additional information about $\mathbf{k}$, we can further reduce the number of rows in $\VEC{D}_c$ required to recover $\VEC{k}$. In particular, if $f$ has degree no greater than $t$, i.e., $\norm{\VEC{k}}_0 > t \implies F[\VEC{k}] = 0$, we can prove the following result.
\begin{proposition}\label{prop:source_coding}
    For any prime $q$ and $\VEC{k} \in \Z_q^n$ such that $\norm{\VEC{k}}_0 \leq t$ there exists a $\VEC{D}_c \in \Z_{q}^{P \times n}$ with $P = 2t\left \lceil \log_q(n)\right \rceil$ such that $\VEC{k}$ can be exactly recovered from $\VEC{D}_c \VEC{k}$.
\end{proposition}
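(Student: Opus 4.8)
The plan is to read this as a purely coding-theoretic existence statement and to produce an explicit matrix. We want $\VEC{D}_c \in \Z_q^{P \times n}$ such that the linear map $\VEC{k} \mapsto \VEC{D}_c \VEC{k}$ is injective on the set of vectors with $\norm{\VEC{k}}_0 \le t$. Since $q$ is prime, $\Z_q = \mathbb{F}_q$ is a field, and this injectivity is equivalent to the requirement that the difference of any two distinct $t$-sparse vectors — i.e.\ every nonzero vector of Hamming weight at most $2t$ — avoid $\ker \VEC{D}_c$; equivalently, \emph{every $2t$ columns of $\VEC{D}_c$ must be $\mathbb{F}_q$-linearly independent}. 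So it suffices to exhibit one matrix with this property and with $P = 2t\lceil \log_q n\rceil$ rows.

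First I would set $m = \lceil \log_q n\rceil$, so that $q^m \ge n$, and choose $n$ distinct field elements $\alpha_1,\dots,\alpha_n \in \mathbb{F}_{q^m}$ (possible precisely because $q^m \ge n$; the value $0$ is admissible, which is why I use the Vandermonde form below). Then I would define $\tilde{\VEC{D}} \in \mathbb{F}_{q^m}^{2t \times n}$ with $i$-th column $(1,\alpha_i,\alpha_i^2,\dots,\alpha_i^{2t-1})^{\mathrm T}$, and obtain $\VEC{D}_c \in \mathbb{F}_q^{2tm \times n}$ by replacing each entry of $\tilde{\VEC{D}}$ with its coordinate vector in $\mathbb{F}_q^m$ relative to a fixed $\mathbb{F}_q$-basis of $\mathbb{F}_{q^m}$. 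This has $P = 2tm = 2t\lceil\log_q n\rceil$ rows, matching the claim. (This $\VEC{D}_c$ is nothing but a parity-check matrix of a shortened $q$-ary alternant / Reed--Solomon-derived code of designed distance $2t+1$.)

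The main step is verifying the column-independence property. Coordinate expansion is an $\mathbb{F}_q$-linear isomorphism $\mathbb{F}_{q^m}\to\mathbb{F}_q^m$; applied coordinatewise it identifies $\mathbb{F}_{q^m}^{2t}$ with $\mathbb{F}_q^{2tm}$ as $\mathbb{F}_q$-vector spaces and carries the columns of $\tilde{\VEC{D}}$ to those of $\VEC{D}_c$. Hence a set of columns of $\VEC{D}_c$ is $\mathbb{F}_q$-linearly independent iff the corresponding columns of $\tilde{\VEC{D}}$ are $\mathbb{F}_q$-linearly independent in $\mathbb{F}_{q^m}^{2t}$, and for that it is enough that they be $\mathbb{F}_{q^m}$-linearly independent. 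So take any $T \subseteq [n]$ with $|T| = s \le 2t$; the $s\times s$ submatrix of $\tilde{\VEC{D}}$ on rows $0,1,\dots,s-1$ and columns $T$ is the Vandermonde matrix $[\alpha_i^{\,j}]_{0\le j\le s-1,\; i\in T}$, which is nonsingular because the $\alpha_i$ for $i \in T$ are distinct. Thus those columns are $\mathbb{F}_{q^m}$-independent, hence $\mathbb{F}_q$-independent, which gives the desired property and therefore exact recoverability of $\VEC{k}$; in fact $\VEC{k}$ can then be recovered efficiently by Reed--Solomon / alternant decoding (e.g.\ Berlekamp--Massey), though only existence is asserted.

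I do not expect a genuine obstacle here: the statement repackages the standard fact that $q$-ary alternant codes attain minimum distance $2t+1$ with redundancy $2t\lceil\log_q n\rceil$. The only points needing care are the equivalence between recoverability of $t$-sparse vectors and independence of every $2t$ columns, the subfield-expansion bookkeeping (that $\mathbb{F}_q$-independence is preserved under expansion and that $\mathbb{F}_{q^m}$-independence suffices), and the off-by-one check that $q^{\lceil\log_q n\rceil}\ge n$ really supplies $n$ usable evaluation points — which is exactly why I take the Vandermonde column to begin at power $0$, so that $\alpha_i = 0$ is allowed.
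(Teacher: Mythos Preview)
Your proposal is correct and is essentially the same approach as the paper's proof: the paper takes $\VEC{D}_c$ to be the parity-check matrix of a shortened $q$-ary BCH code obtained as the $\mathbb{F}_q$-subfield subcode of an $\mathrm{RS}(q^m,q^m-2t)$ code with $m=\lceil\log_q n\rceil$, which is exactly your explicit Vandermonde-over-$\mathbb{F}_{q^m}$ matrix expanded in an $\mathbb{F}_q$-basis. The only difference is presentational---the paper invokes the standard facts that subfield-subcodes and shortening preserve minimum distance $\ge 2t+1$, while you unpack this directly via the Vandermonde determinant and the $\mathbb{F}_q$-linear isomorphism $\mathbb{F}_{q^m}\cong\mathbb{F}_q^m$; your version is more self-contained and also addresses the boundary case $q^m=n$ explicitly.
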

 As a final note, we point out that Proposition \ref{prop:source_coding} can be used to significantly reduce the constant factor in \cite{Amrollahi2019} when $q=2$.
\subsection{Peeling Decoder}

Once we have determined the bin detection procedure, the final step is the use of a peeling decoder. Algorithm \ref{alg} describes the procedure in detail. It is analogous to decoding a code over a bipartite graph, where the variable nodes are the non-zero $F[\VEC{k}]$ and the check nodes are $U_{c}[\VEC{j}]$, respectively. There is an edge between $\VEC{U}_c[\VEC{j}]$ and $F[\VEC{k}]$ only if $\VEC{M}_c^\textrm{T} \VEC{k} = \VEC{j}$. After sub-sampling, the singletons are identified, and their values are subtracted from the $\VEC{U}_c$ values that they are connected to. This can be thought of as ``peeling" an edge of the bipartite graph. With the careful choice of parameters, we can prove the following performance guarantee.

\begin{figure}[ht]
    \centering
    \vspace{-6pt}
    \includegraphics[width=0.9\columnwidth]{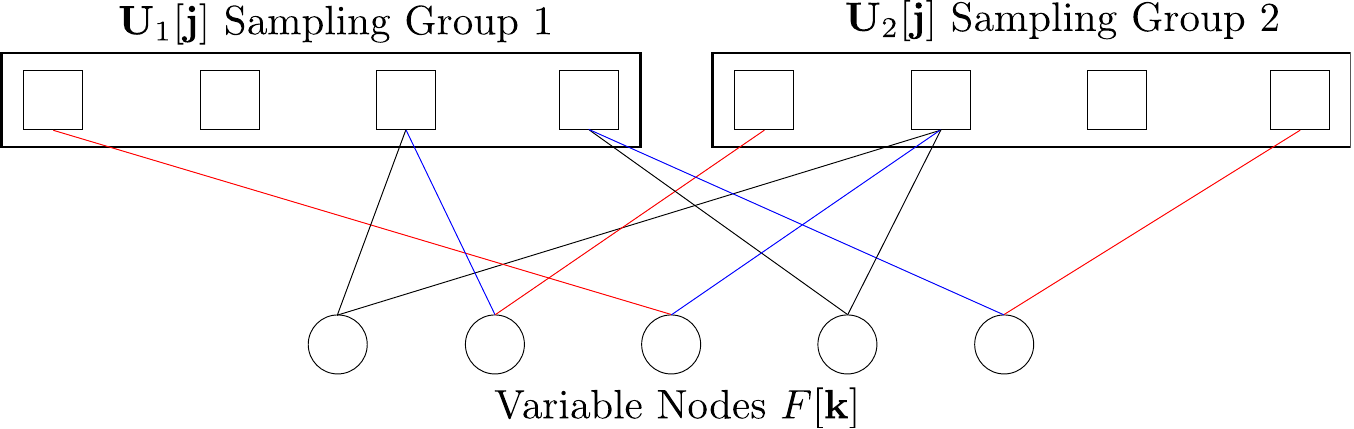}
    \caption{Bipartite graph representation of the peeling phase. Red edges correspond to singletons. These edges can be ``peeled'' from the graph and the value of the variable node that they are connected to can be determined. Then the blue edges can also be removed, revealing more singletons. This process iterates until either all edges have been peeled, or no singletons remain.}
    \label{fig:bipartite}
    \vspace{-12pt}
\end{figure}

\begin{theorem}[Noiseless Peeling Decoder Performance]\label{thm:noiseless}
    Consider a $q$-ary function $f$ as \eqref{eq:q_fourier_transform} satisfying Assumption \ref{ass:sparse}. For inputs $C = O(1)$, $P = n$, $q^b = O(S)$, $\VEC{D}_c = \mathbf{I}_n$, there exist some $\VEC{M}_c$ such that the output \hyperref[alg]{q-SFT} is exactly $F$ with probability at least $1 - O(1/S)$. With these inputs \hyperref[alg]{q-SFT} requires $O(Sn)$ samples and $O(S n^2 \log q)$ computations.
\begin{proof}  See Appendix \ref{apdx:peeling_thm}.
\end{proof}
\end{theorem}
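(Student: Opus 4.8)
### Proof proposal

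The plan is to analyze the peeling decoder via a standard density-evolution / branching-process argument, exactly as in the binary sparse-Fourier literature (\cite{Li2015,Amrollahi2019}), adapted to the module $\Z_q^n$. First I would verify that with the choice $\VEC{D}_c = \VEC{I}_n$ and $P = n$ offsets, the bin-detection rule \eqref{eq:noiseless_type} is \emph{exact} in the noiseless regime: a zero-ton produces $\VEC{U}_c[\VEC{j}] = \VEC{0}$, a singleton produces $U_{c,p}[\VEC{j}] = F[\VEC{k}]\omega^{\langle \VEC{d}_{c,p},\VEC{k}\rangle}$ so all magnitude ratios equal $1$ and $\mathrm{arg}_q$ recovers $\VEC{k}$ coordinate-by-coordinate via \eqref{eq:singleton_index_recovery}, and a genuine multi-ton almost surely fails the ratio test (with probability $1$ over the continuous-valued coefficients, or we argue the measure-zero coincidence set does not occur under Assumption \ref{ass:sparse}). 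This reduces the problem to: does the peeling process on the bipartite graph terminate with all edges removed?

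Next I would set up the random bipartite graph model. Each of the $S$ nonzero coefficients $F[\VEC{k}]$ is a variable (left) node; for each group $c \in [C]$ there are $B = q^b = O(S)$ check (right) nodes $\VEC{U}_c[\VEC{j}]$, and $F[\VEC{k}]$ connects to check $\VEC{j} = \VEC{M}_c^\textrm{T}\VEC{k}$ in group $c$. The key structural claim is that for a suitable choice of the $\VEC{M}_c$ (e.g. chosen so that the induced hash functions $\VEC{k}\mapsto \VEC{M}_c^\textrm{T}\VEC{k}$ behave like independent random hashes), and because the support $\mathcal{S}$ is itself uniformly random over $\Z_q^n$ by Assumption \ref{ass:sparse}.1, the map $\VEC{k}\mapsto \VEC{M}_c^\textrm{T}\VEC{k}$ distributes the $S$ support elements into the $B$ bins in a way that is close (in total variation, over the randomness of $\mathcal{S}$) to balls-into-bins with $S$ balls and $B$ bins. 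I would make this precise with a short lemma: since each $\VEC{k}$ is uniform on $\Z_q^n$ and $\VEC{M}_c$ has rank $b$, $\VEC{M}_c^\textrm{T}\VEC{k}$ is uniform on $\Z_q^b$; and for distinct $\VEC{k}, \VEC{k}'$ the pair $(\VEC{M}_c^\textrm{T}\VEC{k}, \VEC{M}_c^\textrm{T}\VEC{k}')$ is pairwise-independent-uniform, so the collision structure across the $C$ groups matches the $d$-left-regular random graph ensemble used in peeling analyses.

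With that reduction in hand, the argument is the now-classical one: the local neighborhood of a fixed edge in this graph converges (as $S\to\infty$, with $B = \Theta(S)$, $C$ fixed) to a Galton–Watson tree, and peeling succeeds on the whole graph iff a recursive erasure-decoding recursion on the tree converges to zero. One picks the oversampling constant $C$ (equivalently the ratio $\eta = B/S$) large enough — here $C = O(1)$ suffices because, unlike LDPC codes, we are free to choose the "rate" — so that the recursion $p_{i+1} = (1 - e^{-p_i/\eta})^{C-1}$-type fixed point is only $p = 0$; then a standard martingale/Doob concentration argument (Wormald's method or the edge-exposure martingale as in \cite{Li2015}) upgrades "the recursion drives the expected fraction of unpeeled edges below $\epsilon$" to "with probability $1 - O(1/S)$ the final number of unpeeled edges is $0$". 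The $1 - O(1/S)$ failure probability comes from (i) the expander/union-bound step showing no small stopping set survives and (ii) the total-variation cost of the balls-into-bins approximation, both of which I would bound by $O(1/S)$. Finally, the complexity claims: $C$ groups $\times$ $n+1$ offsets $\times$ $B = O(S)$ samples gives $O(Sn)$ queries; computing each size-$B$ Fourier transform in \eqref{eq:subsampled_fourier} costs $O(B\log B) = O(S\log S) = O(Sn\log q)$ and there are $O(n)$ of them, while peeling touches $O(S)$ singletons each triggering $O(C)$ updates of cost $O(n)$ plus an $\mathrm{arg}_q$ evaluation of cost $O(\log q)$ per coordinate, for $O(Sn\log q)$ overall — wait, the theorem states $O(Sn^2\log q)$, so I would account for the per-node work of $O(n)$ coordinates each requiring recomputation over $O(n)$ offsets, giving the extra factor of $n$.

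The main obstacle is the second step: justifying that the deterministic hashing $\VEC{k}\mapsto \VEC{M}_c^\textrm{T}\VEC{k}$ over the module $\Z_q^n$ (where $q$ need not be prime, so $\Z_q$ is not a field and $\VEC{M}_c$ need not be full-rank in the usual sense) yields a bin-occupancy distribution close enough to the idealized random-graph ensemble that the tree-convergence and concentration machinery applies verbatim. When $q$ is composite one must be careful about zero divisors: $\VEC{M}_c^\textrm{T}\VEC{k} = \VEC{j}$ may have a number of solutions that depends on $\VEC{j}$ through the Smith normal form of $\VEC{M}_c$. I expect the resolution is to choose $\VEC{M}_c$ whose columns generate a free direct summand of $\Z_q^n$ (so the map is surjective with all fibers of equal size $q^{n-b}$), which is possible since $b < n$, and then the uniform-support assumption makes everything go through; but pinning this down — and confirming pairwise independence of the hashes across the $C$ groups under whatever explicit construction of the $\VEC{M}_c$ is used — is the delicate part of the proof.
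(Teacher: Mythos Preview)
Your proposal is correct and follows essentially the same route as the paper: exact bin detection with $\VEC{D}_c = \VEC{I}_n$, reduction to peeling on a bipartite graph, density evolution with the recursion $p_i = (1 - e^{-p_{i-1}/\eta})^{C-1}$, tree-neighborhood coupling plus Azuma-type concentration, and an expander argument to eliminate the last $\epsilon S$ edges. The complexity accounting also matches.

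The one place you work harder than necessary is your ``main obstacle'': you worry about composite $q$, Smith normal forms, and choosing $\VEC{M}_c$ whose columns generate a free direct summand so that all fibers of $\VEC{k}\mapsto\VEC{M}_c^{\textrm{T}}\VEC{k}$ have equal size. The paper sidesteps this entirely by taking the $\VEC{M}_c$ to be explicit coordinate projections, $\VEC{M}_c = [\VEC{0}_{b\times(c-1)b},\ \VEC{I}_{b\times b},\ \VEC{0}_{b\times(n-cb)}]^{\textrm{T}}$ for $c\in[C]$ (stated for the regime $0\le\delta\le 1/3$, with the other ranges handled analogously in \cite{Li2015}). With this choice $\VEC{M}_c^{\textrm{T}}\VEC{k}$ simply reads off a disjoint block of $b$ coordinates of $\VEC{k}$, so surjectivity, equal fiber sizes, and independence of the $C$ hashes across groups are all immediate from the uniform-support part of Assumption~\ref{ass:sparse}; no ring-theoretic subtleties arise, and the balls-into-bins model is exact rather than a total-variation approximation. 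Your more general construction would also work, but the paper's concrete choice is cleaner and makes the Poisson edge-degree distribution $\rho_j = (1/\eta)^{j-1}e^{-1/\eta}/(j-1)!$ drop out directly.
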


 \section{Algorithm: Noise Robust}\label{sec:noisy}

The key to robustness is changing the bin detection scheme of Section~\ref{subsec:bin_detect} to account for noise by subsampling carefully.
We have seen that the offset signature $\omega^{ \VEC{D} \VEC{k}}$ is the key to decoding the unknown pair $(\VEC{k}, F[\VEC{k}])$. Denoting $\VEC{s}_{\VEC{k}} = \omega^{ \VEC{D} \VEC{k}}$,
let $\VEC{S} = [ \dotsc, \VEC{s}_{\VEC{k}}, \dotsc]$ be the offset signatures associated with offsets $\VEC{D}$. Then, in the presence of noise the bin observation vector $\VEC{U}$ \footnote{For
simplicity in this section, we drop the group index $c$ and bin index $\VEC{j}$ when we mention bin observations, e.g. we write $\VEC{U}$ to denote $\VEC{U}_{c}[\VEC{j}]$. } can be written as
$    \VEC{U} = \VEC{S} \VEC{\alpha} + \VEC{W}
\label{eqn:matrix_observations}$
for some sparse vector $\VEC{\alpha} = [\dots, \alpha[\VEC{k}], \dots]^{\mathrm{T}}$ such that $\alpha[\VEC{k}] = F[\VEC{k}]$ if $ \VEC{M}^\mathrm{T} \VEC{k} = \VEC{j}$ and $\alpha[\VEC{k}] = 0$ otherwise. In the case of single-tons, $\VEC{\alpha}$ is 1-sparse and therefore $\VEC{U}$ can be regarded as the noise-corrupted version of some code word from the codebook $\VEC{S}$. Further, it can be shown that $\VEC{W}$ has multivariate complex Gaussian distribution with zero mean and covariance $\nu^2 \VEC{I}$ where $\nu^2 := \sigma^2 / B$.
\subsection{Near-linear Time Robust Bin Detection}\label{subsec:robust_bin_detect}

The near-linear (in $N$) time bin identification scheme uses $P$ random offsets $\VEC{d}_p$ for $p \in [P]$ chosen independently and uniformly at random over $\mathbb{Z}_q^n$. For some $\gamma \in (0,1)$, detection is performed by proceeding with the following steps.
\begin{enumerate}[leftmargin=\parindent, align=left, labelwidth=\parindent, labelsep=1pt]
    \item \textbf{Zero-ton verification: } We first rule out zero-tons. We declare $\type{\VEC{U}} = \mathcal{H}_{Z}$ if
$
    \frac{1}{P} \|\VEC{U}\|^2 \leq (1+\gamma) \nu^2.
$
    \item \textbf{Single-ton search: } The next step is to estimate $(\VEC{k}, F[\VEC{k}])$ assuming that bin $\VEC{j}$ is a singleton. We compute the Maximum Likelihood Estimate (MLE) of $\VEC{k}$, which involves a search over the
$N/B = q^{n-b}$ possible values of $\VEC{k}$ that satisfy $\VEC{M}_c^\textrm{T} \VEC{k} = \VEC{j}$.
    For each location $\VEC{k}$, we write MLE of the single-ton coefficient as
        $\widehat{\alpha}[\VEC{k}] = \VEC{s}_{\VEC{k}}^\textrm{T} \VEC{U} / P$, thus
    \begin{equation}
        \widehat{\VEC{k}} = \argmin_{\VEC{k} \;:\; \VEC{M}_c^\textrm{T} \VEC{k} = \VEC{j}} \|\VEC{U} - \widehat{\alpha}[\VEC{k}] \VEC{s}_{\VEC{k}}\|^2.
    \end{equation}
    Given that $F$ satisfies Assumption~\ref{ass:sparse2}, we estimate the value of the coefficient corresponding to $\widehat{\VEC{k}}$ as
    \begin{equation}
     \widehat{F}[\widehat{\VEC{k}}] = \argmin_{\alpha \in \mathcal{X}} \| \alpha - \VEC{s}_{\widehat{\VEC{k}}}^\textrm{T} \VEC{U} / P\|.
     \label{eqn:value_estimation}
    \end{equation}
    \item \textbf{Singleton verification: } This step confirms whether  the bin with estimated singleton pair $(\widehat{\VEC{k}}, \widehat{F}[\widehat{\VEC{k}}])$ is a singleton via a residual test. We declare $\type{\VEC{U}} = \mathcal{H}_{S}(\widehat{\VEC{k}}, \widehat{F}[\widehat{\VEC{k}}]) $ if
\begin{equation}
    \frac{1}{P} \norm{\VEC{U} - \widehat{F}[\widehat{\VEC{k}}] \VEC{s}_{\widehat{\VEC{k}}} }^2 \leq (1+\gamma) \nu^2.
\end{equation}
    Otherwise, we declare $\type{\VEC{U}} = \mathcal{H}_{M}$.
\end{enumerate}

\subsection{Sub-linear Time Robust Bin Detection}

The design described in the previous section requires an exhaustive search due to a lack of structure in random offsets. In order to overcome this bottleneck in computational complexity, we design offsets that enable symbol-by-symbol recovery of the singleton index $\VEC{k}$. We let $P = P_1 n$ and generate $P_1$ random offsets $\VEC{d}_p$ for $p \in [P_1]$ chosen independently and uniformly at random over $\mathbb{Z}_q^n$. We continue to perform zero-ton and singleton verification steps using this set of $P_1$ random offsets as in the previous section. However, to achieve sub-linear time for the single-ton search step, we generate $n$ modulated offsets $\VEC{d}_{p,r}$ for each $p \in [P_1]$ such that 
\begin{equation} \label{eq:nso_delay}
    \VEC{d}_{p} \oplus_q \VEC{e}_{r} = \VEC{d}_{p,r}, \quad \forall r \in [n]
\end{equation}
where $\VEC{e}_{r}$ is the $r$-th column of the identity matrix. Given these offsets, we can identify the $r$-th symbol of $\VEC{k}$ by jointly considering the observations associated with offsets $\{\VEC{d}_{p,r}\}_{p \in P_1}$. In particular, the observations satisfy the following proposition.
\begin{proposition}\label{prop:1}
Given a single-ton bin $(\VEC{k}, F[\VEC{k}])$, the $q$-quantized argument of the ratio of observations
\begin{equation}
    U_{p} = F[\VEC{k}] \omega^{\langle \VEC{d}_{p}, \VEC{k} \rangle} + W_{p},\;\;
    U_{p,r} = F[\VEC{k}] \omega^{\langle \VEC{d}_{p,r}, \VEC{k} \rangle} + W_{p,r}
\end{equation}
satisfies
    $\text{arg}_q[U_{p,r} / U_{p}] = \langle \VEC{e}_{r}, \VEC{k} \rangle \oplus_q Z_{p, r}$
where $Z_{p, r}$ is a random variable over $\mathbb{Z}_q$ with $p_i := \mathbb{P}(Z_{p,r} = i)$. The distribution satisfies $p_i < p_0$ for all $i \neq 0$ and $\sum_{i \neq 0} p_i \leq \mathbb{P}_e := 2 e^{-\frac{\zeta}{2} \text{SNR}}$ for $\zeta := \eta \sin^2{(\pi/2q)}$. 
\label{prop:quant}
\end{proposition}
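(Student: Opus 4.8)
The plan is to compute the ratio $U_{p,r}/U_p$ directly and track how the additive Gaussian noise perturbs its argument. First I would write out the noiseless part: since $\VEC{d}_{p,r} = \VEC{d}_p \oplus_q \VEC{e}_r$ and the inner products are taken in $\Z_q$, we have $\langle \VEC{d}_{p,r}, \VEC{k}\rangle = \langle \VEC{d}_p, \VEC{k}\rangle \oplus_q \langle \VEC{e}_r, \VEC{k}\rangle$, so that $F[\VEC{k}]\omega^{\langle \VEC{d}_{p,r},\VEC{k}\rangle} = \bigl(F[\VEC{k}]\omega^{\langle\VEC{d}_p,\VEC{k}\rangle}\bigr)\,\omega^{\langle\VEC{e}_r,\VEC{k}\rangle}$. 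Thus in the absence of noise $U_{p,r}/U_p = \omega^{\langle \VEC{e}_r,\VEC{k}\rangle}$ exactly, and $\text{arg}_q$ of this is precisely $\langle \VEC{e}_r,\VEC{k}\rangle$; this defines the "signal" value and pins down that $Z_{p,r} = 0$ is the no-error event. Then I would define $Z_{p,r} := \text{arg}_q[U_{p,r}/U_p] \ominus_q \langle\VEC{e}_r,\VEC{k}\rangle$ and the task reduces to bounding $\mathbb{P}(Z_{p,r}\neq 0)$ and showing the distribution is peaked at $0$.

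Next I would reduce the error event to a statement about the two complex samples $U_p$ and $U_{p,r}$ being close to their noiseless values. Write $U_p = A + W_p$ and $U_{p,r} = A\omega^{\langle\VEC{e}_r,\VEC{k}\rangle} + W_{p,r}$ with $|A| = |F[\VEC{k}]| = \rho$ and $W_p, W_{p,r} \sim \mathcal{CN}(0,\nu^2)$ independent. The quantity $\text{arg}_q[U_{p,r}/U_p]$ lands in the wrong bin only if the combined angular perturbation of $U_{p,r}/U_p$ away from $\omega^{\langle\VEC{e}_r,\VEC{k}\rangle}$ exceeds the half-width $\pi/q$ of a quantization cell (the $e^{j\pi/q}$ shift inside $\text{arg}_q$ centers the cells on the lattice points $\omega^i$). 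A clean way to organize this: $Z_{p,r}\neq 0$ implies that at least one of $U_p$, $U_{p,r}$ has been rotated by more than $\pi/(2q)$ from its mean direction, i.e. the phase noise of a single $\mathcal{CN}$-corrupted constant of modulus $\rho$ exceeds $\pi/(2q)$. So I would first establish the single-sample bound: for $Y = \rho + W$, $W\sim\mathcal{CN}(0,\nu^2)$, one has $\mathbb{P}(|\text{arg}(Y)| > \theta) \le e^{-(\rho^2/\nu^2)\sin^2\theta}$ — this is the standard tail estimate for the phase of a Gaussian-perturbed vector, obtained by noting that the perpendicular component of $W$ must exceed $\rho\sin\theta$ while being Gaussian with variance $\nu^2/2$ in each real coordinate, giving a $\tfrac12 e^{-\rho^2\sin^2\theta/\nu^2}$-type bound; a union over the two sides and the two samples gives $\mathbb{P}_e = 2e^{-(\rho^2/\nu^2)\sin^2(\pi/2q)}$. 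Finally I would identify $\rho^2/\nu^2 = \rho^2 B/\sigma^2 = (\rho^2 S/\sigma^2)(B/S) = \mathrm{SNR}\cdot(B/S)$, and with $B/S = \eta$ a constant this is $\eta\cdot\mathrm{SNR}$, so $\zeta = \eta\sin^2(\pi/2q)$ and $\mathbb{P}_e = 2e^{-\frac{\zeta}{2}\mathrm{SNR}}$ up to the factor conventions in the statement; absorbing the $1/2$ appropriately matches the claimed form.

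For the "peaked at $0$" claim $p_i < p_0$ for $i\neq 0$, I would argue by a symmetry/monotonicity argument: conditioned on the realized phases, the probability that the quantizer outputs a shift of $i$ cells is governed by how much angular mass the perturbed ratio places in the $i$-th cell away from center, and because the phase-error density of each Gaussian-perturbed vector is unimodal and symmetric about $0$ (its density is a decreasing function of $|\text{angle}|$ on $(-\pi,\pi]$), the convolution governing the ratio's phase is also unimodal and symmetric, so the central cell $i=0$ captures strictly more mass than any other cell. This requires only the unimodality of the wrapped phase-noise density, which follows from the explicit formula for the phase distribution of $\rho + \mathcal{CN}(0,\nu^2)$.

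The main obstacle I anticipate is getting the constants exactly right — specifically, cleanly passing from the phase-error of the \emph{ratio} $U_{p,r}/U_p$ (whose phase is a difference of two dependent-looking but actually independent perturbations) to two independent single-sample phase-error events without losing a factor, and making the $\sin^2(\pi/2q)$ versus $\sin^2(\pi/q)$ bookkeeping match the cell half-width after the $e^{j\pi/q}$ recentering. The probabilistic content is routine Gaussian tail analysis; the care is entirely in the trigonometric constants and in not double-counting the union bound.
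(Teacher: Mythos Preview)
Your proposal is correct and follows essentially the same route as the paper: decompose $\text{arg}[U_{p,r}/U_p]$ as the target $\frac{2\pi}{q}\langle\VEC{e}_r,\VEC{k}\rangle$ plus the difference $Y_{p,r}-Y_p$ of two independent phase errors, bound each phase error via $\mathrm{Pr}(|Y|\geq\alpha)\leq\mathrm{Pr}(|W|\geq\rho\sin\alpha)$ at $\alpha=\pi/2q$, and union bound over the two samples. You actually go slightly beyond the paper by sketching the unimodality argument for $p_i<p_0$, which the paper's proof asserts but does not justify.
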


\begin{proof}
    See Appendix \ref{proof:prop_quant}
\end{proof}

Based on this observation, we can apply a majority test to estimate $r$-th entry of $\VEC{k}$ as
\begin{equation}
    \hat{k}[r] = \arg \max_{a \in \mathbb{Z}_q} \sum_{p \in [P_1]} \mathbb{1}\{a = \text{arg}_q[U_{p,r} / U_{p}]\}
\end{equation} 

Then, using the estimation $\widehat{\VEC{k}}$ of the index, the estimation for the value of the coefficient is obtained as in \eqref{eqn:value_estimation}.

\begin{theorem}[Robust Peeling Decoder Performance]\label{thm:noisy}
    Consider a $q$-ary function $f$ as \eqref{eq:q_fourier_transform} satisfying Assumption \ref{ass:sparse} and \ref{ass:sparse2}. For inputs $C = O(1)$, $P = n$, $q^b = O(S)$, $\VEC{D}_c$ chosen uniformly at random over $\Z_q^{P \times n}$, there exist some $\VEC{M}_c$ such that the output \hyperref[alg]{q-SFT} is exactly $F$ with probability at least $1 - O(1/S)$. With these inputs \hyperref[alg]{q-SFT} requires $O(S n)$ samples and $O(nq^n)$ computations as $n$ grows with fixed $q$. If instead, we have $P = n^2$ and $\VEC{D}_c$ chosen as in \eqref{eq:nso_delay}, the same result holds with $O(Sn^2)$ samples and $O(S n^3)$ computations as $n$ grows.
\end{theorem}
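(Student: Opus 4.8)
The plan is to mirror the structure of the noiseless analysis (Theorem \ref{thm:noiseless}) but replace the exact bin-detection step with a probabilistic one whose failure probability is small enough to be absorbed into the peeling argument. First I would establish, from Assumption \ref{ass:sparse} and the balanced hashing induced by a suitable choice of $\VEC{M}_c$, that the bipartite graph between the $S$ nonzero coefficients and the $C q^b = O(S)$ bins is (up to lower-order corrections) the same random bipartite graph that governs the noiseless peeling decoder; hence the density-evolution / expansion argument that shows peeling succeeds with probability $1 - O(1/S)$ carries over verbatim, \emph{provided} every bin-detection call it makes during the peel is correct. So the real work is bounding the probability that bin detection ever errs.

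Next I would analyze the three possible bin-detection mistakes in the noisy setting: (i) a zero-ton or peeled-to-empty bin is misclassified as a singleton or multiton; (ii) a genuine singleton is misclassified (wrong $\widehat{\VEC{k}}$, wrong $\widehat F[\widehat{\VEC{k}}]$, or rejected as a multiton); (iii) a multiton is accepted as a singleton. For the near-linear-time scheme ($P = n$, random $\VEC{D}_c$, MLE search), each of these is a standard concentration event for the residual-energy statistic $\tfrac1P\|\VEC{U} - \widehat\alpha \VEC{s}_{\widehat{\VEC{k}}}\|^2$ versus $(1+\gamma)\nu^2$: the $\chi^2$-type tail for the $P$-dimensional Gaussian $\VEC{W}$, together with a union bound over the $N/B = q^{n-b}$ candidate locations in the MLE, gives a per-bin error probability of the form $q^{n}\,e^{-\Theta(P)} = q^{n-\Theta(n)}$, which for a large enough constant in $P=n$ is $o(1/S)$ (using $S = q^{n\delta}$, $\delta<1$, and SNR = $\Theta(1)$ so $\nu^2 = \sigma^2/B = \Theta(\sigma^2/S)$). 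A union bound over the $O(Sn)$ bin-detection calls made across all peeling iterations then keeps the total detection-failure probability at $O(1/S)$, and combining with the graph-peeling failure probability gives the claimed $1-O(1/S)$. The sample count is $C P B = O(n \cdot S) = O(Sn)$ and the cost is dominated by the MLE exhaustive search over $q^{n-b}$ locations per bin, i.e. $O(n q^{n})$ total.

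For the sub-linear-time variant I would instead invoke Proposition \ref{prop:quant}: with the modulated offsets \eqref{eq:nso_delay} and $P = n^2$ (so $P_1 = n$ base offsets, each spawning $n$ shifts), the $r$-th symbol of $\VEC{k}$ is recovered by a majority vote over $P_1 = n$ i.i.d.\ samples of $\langle \VEC{e}_r,\VEC{k}\rangle \oplus_q Z_{p,r}$ where the noise symbol $Z_{p,r}$ has $\mathbb{P}(Z_{p,r}\neq 0) \le \mathbb{P}_e = 2e^{-\frac\zeta2\mathrm{SNR}}$ and, crucially, $p_0$ is strictly larger than every other $p_i$. A Chernoff/Hoeffding bound on the majority vote gives per-symbol error $\le e^{-c n}$ for a constant $c>0$ depending only on $q$ and the SNR gap $p_0 - \max_{i\neq0}p_i$; a union bound over the $n$ symbols gives per-singleton index-error $\le n e^{-cn}$, and the value estimate \eqref{eqn:value_estimation} then succeeds since the averaged coefficient estimate $\VEC{s}_{\widehat{\VEC{k}}}^\textrm{T}\VEC{U}/P$ concentrates around $F[\VEC{k}]\in\mathcal{X}$ within the minimum pairwise distance of the constant-size constellation $\mathcal{X}$ with the same exponential-in-$n$ confidence; zero-ton and singleton verification use the same residual test as before with the $P_1$ unmodulated offsets. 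Hence each bin-detection call errs with probability $\mathrm{poly}(n)\,e^{-\Theta(n)} = o(1/S)$, a union bound over $O(Sn)$ calls is again $O(1/S)$, and the complexity drops to $O(n)$ per symbol times $n$ symbols times $O(S)$ bins, i.e. $O(Sn^2)$ work per subsampling group wait—more carefully, $O(Sn^3)$ total once the $n^2$ offsets and $n$ peeling rounds are accounted for, with $CPB = O(n^2 S) = O(Sn^2)$ samples.

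The main obstacle, and the step I would spend the most care on, is making the union bound over bin-detection calls genuinely valid across the \emph{adaptive} sequence of peeling iterations: the residuals $\VEC{U}_c[\VEC{j}]$ fed to the detector after several peels are no longer fresh observations, since they have been modified by earlier (possibly erroneous) singleton subtractions, so the Gaussian noise model and independence used in the concentration bounds must be re-justified conditionally. The standard fix — which I would follow — is to condition on the (high-probability) event that \emph{all} detections so far have been correct, under which each residual is exactly $\VEC{S}\VEC{\alpha}' + \VEC{W}$ with the \emph{same} noise $\VEC{W}$ as at initialization and a deterministic reduced support $\VEC{\alpha}'$; the detector's behavior is then a function of the fixed randomness $(\{\VEC{D}_c\},\{\VEC{d}_{p}\},\VEC{W})$ drawn once at the start, so a single union bound over all $O(Sn)$ (bin, iteration) pairs that could ever arise suffices. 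One must also check that the $O(S)$ factor in $q^b = O(S)$ and the constant $C$ can be chosen so that the density-evolution threshold for the peeling graph is met simultaneously with these detection guarantees; this is inherited from the noiseless proof in Appendix \ref{apdx:peeling_thm} and needs only the observation that detection errors are rare enough not to perturb the graph-process analysis.
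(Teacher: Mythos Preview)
Your proposal is correct and follows essentially the same route as the paper: reduce to the noiseless peeling argument (Theorem~\ref{thm:noiseless}) conditioned on all bin detections being correct, then union-bound the per-bin detection error---decomposed into miss/false/cross events, each controlled by $\chi^2$-type tails for the residual statistic and, in the sub-linear variant, by Proposition~\ref{prop:quant} plus a Hoeffding-type bound on the majority vote---over all detection calls. One minor discrepancy: the paper counts the calls as $(\text{\# iterations})\times(\text{\# bins}) \le CS \cdot \eta S = O(S^2)$ rather than your $O(Sn)$, and so targets a per-bin error of $O(1/S^3)$; since the per-bin error is in fact $e^{-\Theta(n)}$ either bookkeeping works, and your explicit discussion of the adaptive-conditioning issue (which the paper leaves implicit) is if anything more careful than the original.
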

\begin{proof}
    See Appendix \ref{apdx:peeling_thm_noisy}. The ideas are similar to Theorem \ref{thm:noiseless}, but it requires the use of Proposition \ref{prop:1} to deal with noise.
\end{proof} 

\section{Numerical Experiments}\label{sec:results}
In this section, we provide numerical experiments that showcases the performance of \hyperref[alg]{$q$-SFT} in practice. In addition to studying the algorithm's performance in recovering synthetically generated signals that are sparse in the Fourier transform domain, we discuss the problem of recovering the mean free energy values of RNA given its sequence.
We consider the following simulation setting. For given  parameters $q$ and $n$, we synthetically generate a signal $f$ such that its $q$-ary Fourier transform $F$ is $S$-sparse and $\mathcal{S} = \mathrm{supp}(F)$
chosen uniformly at random over $\mathbb{Z}_q^n$ with values \begin{equation}
F[\mathbf{k}] = 
    \begin{cases}
    V[\mathbf{k}] e^{-j \Omega [\mathbf{k}]}, & \text{if } \mathbf{k} \in \mathcal{S}\\
    0 &\text{otherwise}.
    \end{cases}
\end{equation}
where $V[\mathbf{k}] \in [\rho_{\mathrm{min}}, \rho_{\mathrm{max}}]$ are independent random variables and $\Omega[\mathbf{k}]$ are independent uniform random variables over $[0, 2 \pi)$. To demonstrate the robust recovery performance of $q$-SFT in the presence of noise, we assume access to noisy observations as in \eqref{eq:sampling_model}. 
Note that our theoretical results for robust recovery are given under Assumption \ref{ass:sparse2} which corresponds to having $\rho_{\mathrm{min}} = \rho_{\mathrm{max}} = \rho$ and $\Omega[\mathbf{k}]$ taking values over a finite set of size $\kappa$, however, we consider a more general scenario 
to demonstrate that our algorithm can achieve robust recovery in more general settings.

We measure the accuracy of the outputs using the normalized mean-squared error (NMSE) metric defined as
\begin{equation}
    \mathrm{NMSE} = \frac{\|\widehat{F} - F\|^2} {\|F\|^2} =  \frac{\|\widehat{f} - f\|^2} {\|f\|^2}.
\end{equation}

Fig. ~\ref{fig:a} and ~\ref{fig:b} compare $q$-SFT with LASSO in terms of the number of samples and runtime. To implement LASSO in this complex-valued problem, we consider group regularization \cite{yuan2006model} followed by ridge regression refinement \cite{hoerl1970ridge}. We set 
$q = 3$, $\rho_{\mathrm{min}} = 1$, $\rho_{\mathrm{max}} = 5$, $S = 100$ and $\sigma^2$ such that SNR is fixed at 10dB.
For $q$-SFT, we vary hyper-parameters $b$, $C$, and $P$ such that each combination results in a different number of samples used. Similarly, we set the number of uniformly chosen samples over $\mathbb{Z}_q^n$ provided to LASSO to compare its performance with $q$-SFT at a given sample complexity. LASSO is successful in recovering Fourier coefficients for small problems, but Fig.~\ref{fig:b} indicates that $q$-SFT's runtime scales linearly in $n$ whereas LASSO's scales exponentially in $n$ (linearly in $N$). 
While our algorithm continues to work in a reasonable time for problem sizes as large as $n=18$ in under $0.1$ seconds, LASSO does not run on our computer for larger problems. 

Fig.~\ref{fig:c} shows the relation between noise level and recovery performance. We fix $n= 20$, $q=4$ and $S$ as well as hyper-parameters $b$, $C$ and $P$. We run $q$-SFT at various SNRs, observing a sharp phase transition where our algorithm is successful at high SNR and fails at low SNR. As expected, the transition point is at a lower SNR for sparser signals.

Finally, Fig.~\ref{fig:d} depicts the performance of $q$-SFT on learning a real-world function $f$. Specifically, we consider ViennaRNA \cite{Lorenz2011}, a computational tool that computes the Mean Free Energy (MFE) of RNA sequences. We represent each base with an element of $\mathbb{Z}_4$ and denote different length-$n$ RNA sequences by $\mathbf{k} \in \mathbb{Z}_4^n$. Then, we consider $\textrm{MFE}(\mathbf{k})$ computed by ViennaRNA to be noisy observations of $f[\mathbf{k}]$. We run $q$-SFT with these noisy observations to construct $\widehat{F}$ (hence $\widehat{f}$) and calculate a Test NMSE by calculating the NMSE only over a uniformly randomly chosen support. As can be seen in Fig.~\ref{fig:d}, in all cases $q$-SFT is able to achieve a Test NMSE of $0.1$ when enough samples are provided. For instance, when $n=18$, only $0.01\%$ of the total number of sequences is sufficient to achieve a test NMSE of less than $0.1$.  
 
\begin{figure*}[ht]
\ifdefined\ARXIV
     \centering
     \begin{subfigure}[h]{0.48\textwidth}
         \centering
         \caption{\label{fig:a}}
         \includegraphics[width=\textwidth]{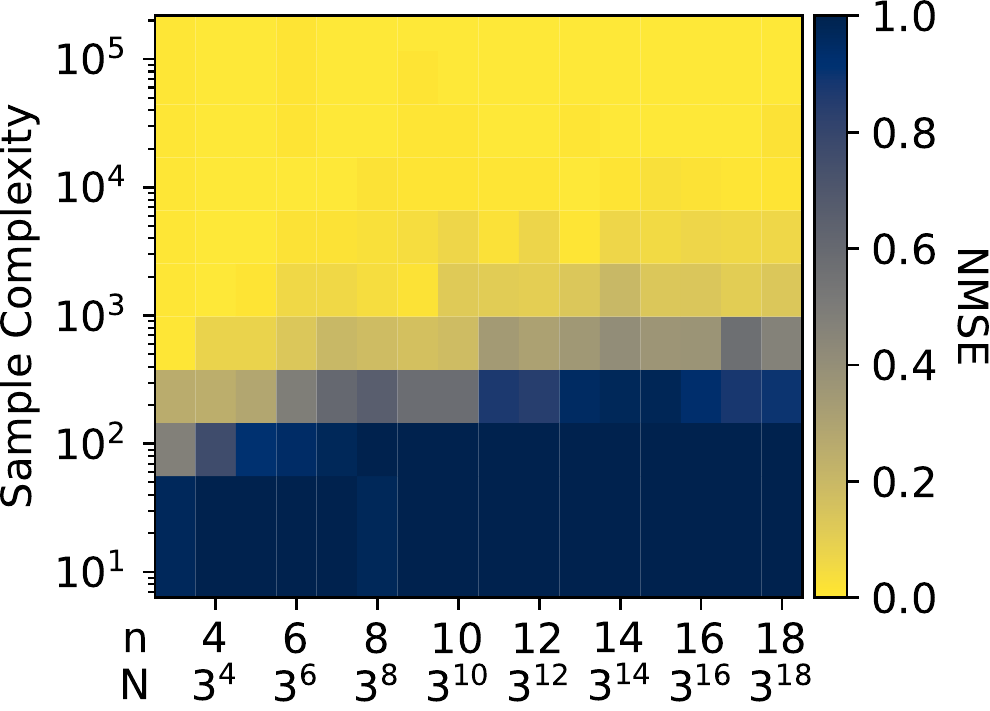}
   \end{subfigure}
   \hfill
     \begin{subfigure}[h]{0.48\textwidth}
         \centering
         \caption{\label{fig:b}}
         \includegraphics[width=\textwidth]{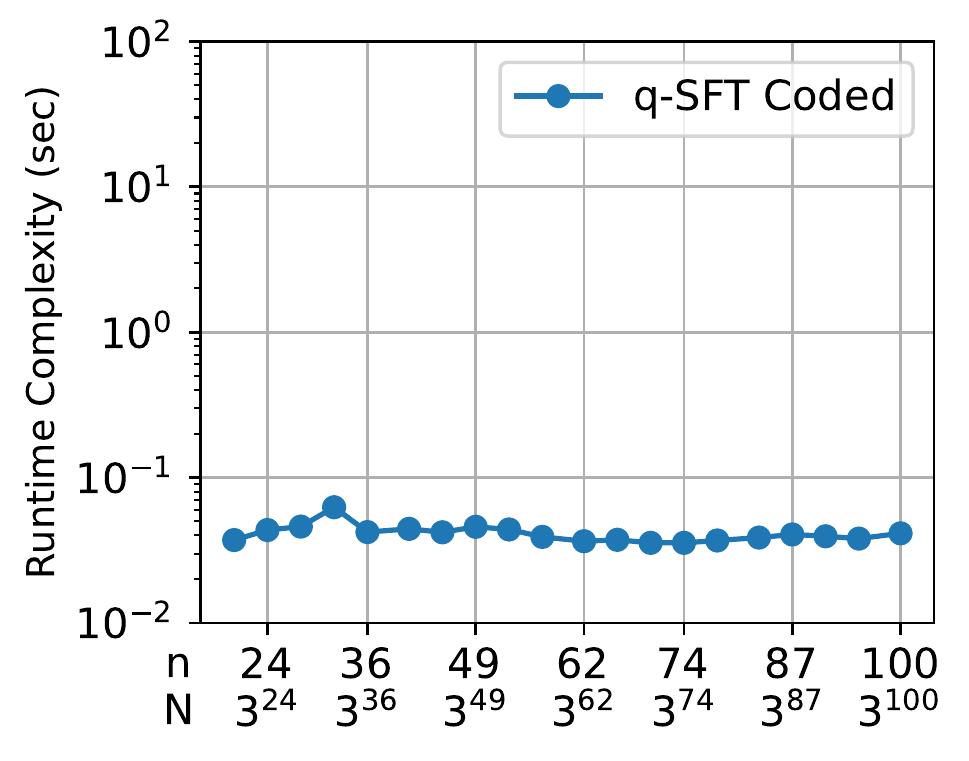}
     \end{subfigure}
     \begin{subfigure}[h]{0.48\textwidth}
         \centering
         \caption{\label{fig:c}}
         \includegraphics[width=\textwidth]{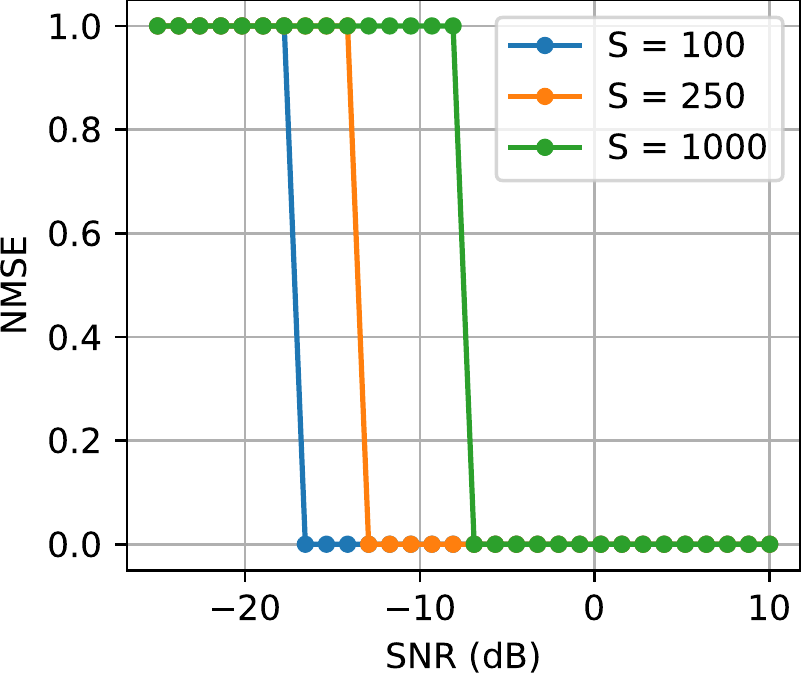}
     \end{subfigure}
    \hfill
     \begin{subfigure}[h]{0.48\textwidth}
         \centering
         \caption{\label{fig:d}}
         \includegraphics[width=\textwidth]{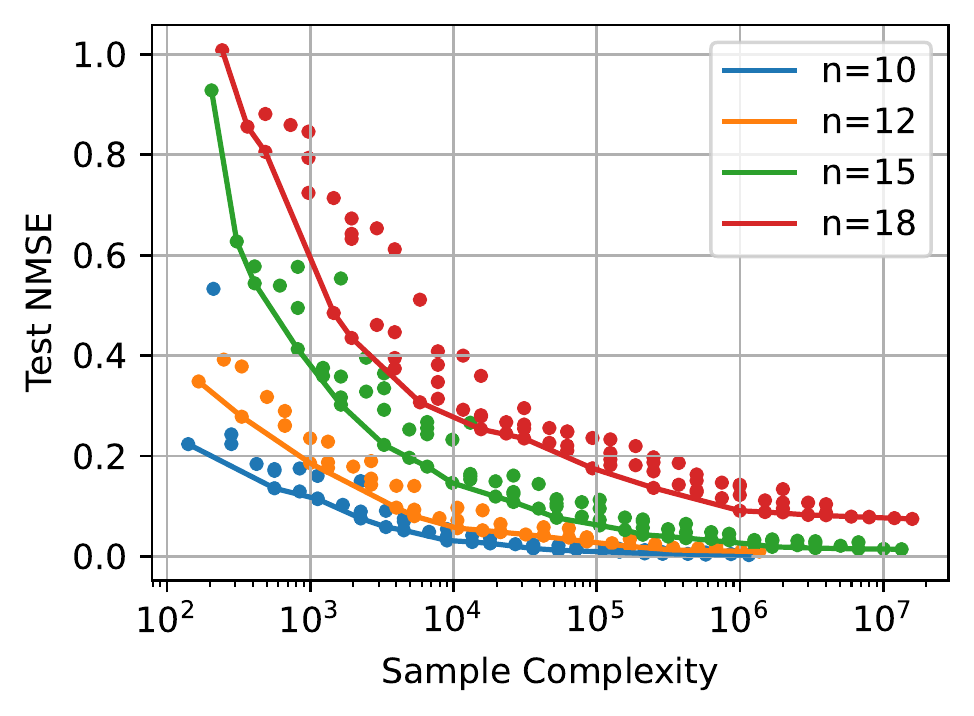}
     \end{subfigure}
\else
\vspace{-10pt}
     \centering
     \begin{subfigure}[h]{0.24\textwidth}
         \centering
         \caption{\label{fig:a}}
         \includegraphics[width=\textwidth]{complexity-vs-n-qspright.pdf}
   \end{subfigure}
   \hfill
     \begin{subfigure}[h]{0.24\textwidth}
         \centering
         \caption{\label{fig:b}}
         \includegraphics[width=\textwidth]{complexity-vs-n-runtime.pdf}
     \end{subfigure}
     \begin{subfigure}[h]{0.24\textwidth}
         \centering
         \caption{\label{fig:c}}
         \includegraphics[width=\textwidth]{nmse-vs-snr.pdf}
     \end{subfigure}
    \hfill
     \begin{subfigure}[h]{0.24\textwidth}
         \centering
         \caption{\label{fig:d}}
         \includegraphics[width=\textwidth]{complexity-vs-n-rna.pdf}
     \end{subfigure}
\fi
    \caption{
    \textbf{(a)} NMSE of $q$-SFT for a range of sample complexity and $N$ on synthetically generated data (note that the transition threshold appears logarithmic in $N$), \textbf{(b)} runtime plotted against $N$ for the experiment in (a) as well as LASSO on the same data (note the exponential runtime of LASSO, while $q$-SFT is sub-linear in $N$), \textbf{(c)} NMSE phase transition of $q$-SFT against SNR (note that $q$-SFT is successful for lower SNR if the sparsity is lower),
    \textbf{(d)} performance of $q$-SFT on the Mean Free Energy function of RNA sequences ($q=4$) as computed by the ViennaRNA \cite{Lorenz2011} (for the larger values of $n$ pictured, $q$-SFT needs only to query a small fraction of the total function evaluations to achieve a test NMSE of less than $0.1$).}
\vspace{-10pt}
\end{figure*}

\section{Conclusion and Future Work}
This manuscript presents a fast and sample efficient algorithm to computing $S$ sparse Fourier transforms of $q$-ary functions. With the wide range of problems that are well modeled by $q$-ary functions, our algorithm $q$-SFT can be applied broadly. We identify recent advancements in deep generative modeling enabled by AlphaFold \cite{Jumper2021} and protein language models \cite{Lin2022} as a significant potential area of application; they allow us to sample biological functions at scale, enabling a compact and explainable representation in terms of high-order sparse polynomials using $q$-ary Fourier transform. Furthermore, though the progress we make in this work is significant, challenges still remain. For example, in the aforementioned biological application $q$-ary functions of bounded degree are of interest. We speculate that that for prime $q$, BCH codes can be used to design efficient offsets $\mathbf{D}$, and it may be possible to design more efficient algorithms in general in that case.

\newpage
\appendix
\subsection{Future Work: Bounded Degree}\label{apdx:low-weight}

In this section we prove Proposition \ref{prop:source_coding}. We note that when the Hamming weight of $\VEC{k}$ is low, \eqref{eq:singleton_index_recovery} can be viewed as a set of parity check equations, and $\VEC{D}_c\VEC{k}$ the corresponding syndrome. Thus, if $\VEC{D}_c$ is a parity check matrix for a $t$-error correcting code, $\VEC{k}$ can be reconstructed. 
\begin{proof}
    Note that since $q$ is prime, $\Z_q$ is equivalent to the field $\mathbb{F}_q$. Let $c \triangleq \left \lceil \log_q(n)\right \rceil$ for brevity. There exists the $t$-error correcting Reed-Solomon code $\mathcal{C}_{\mathrm{RS}} = \mathrm{RS}(q^c, q^c - 2t)_{q^c}$. Thus, there exists a sub-field-sub-code $\mathcal{C}_{\mathrm{BCH}} = \mathbb{F}^{q^c}_q \cap \mathcal{C}_{\mathrm{RS}}$. Furthermore, such a code can be shortened by $q^c - n$ symbols to obtain $\mathcal{C}'_{\mathrm{BCH}}$. Since shortening and considering sub-codes does not reduce the minimum distance, the shortened code is still $t$-error correcting. Let $\VEC{H}_{\mathrm{BCH}} \in \mathbb{F}_q^{2tc \times n}$ represent the parity check matrix of $\mathcal{C}'_{\mathrm{BCH}}$ over $\mathbb{F}_q$. Taking $\VEC{D}_c = \VEC{H}_{\mathrm{BCH}}$ gives us the desired result.
\end{proof}

Ideas from Section \ref{sec:noisy} can be applied on top of Proposition \ref{prop:source_coding} for robustness. In practice, we find that $q$-SFT with $\VEC{D}_c$ chosen as above performs well with random $\VEC{M}_c$ in the bounded degree problem. Due to lack of independence in the aliasing pattern however, a rigorous proof has remained elusive. We also point out that Proposition \ref{prop:source_coding} can be used to significantly reduce the constant factor in \cite{Amrollahi2019}.

Note that Proposition \ref{prop:source_coding} holds only for prime $q$ because we leverage algebraic coding theory, which has mostly focused on the problem of defining codes over fields. While it is possible to construct BCH codes over rings \cite{Blake1972, Shankar1979}, they are generally inferior and yield a weaker result. 

\ifdefined\ARXIV

Fig.~\ref{fig:a2} and ~\ref{fig:b2} show the performance of coded $q$-SFT in terms of the number of samples and runtime. We set 
$q = 3$, $\rho_{\mathrm{min}} = 1$, $\rho_{\mathrm{max}} = 1$, $S = 1000$ $t = 5$, and $\sigma^2$ such that SNR is fixed at 20dB. We vary hyper-parameters $b$, $C$, and $P$ such that each combination results in a different number of samples used. 

\begin{figure*}[h]
     \begin{subfigure}[h]{0.48\textwidth}
         \centering
         \caption{\label{fig:a2}}
         \includegraphics[width=\textwidth]{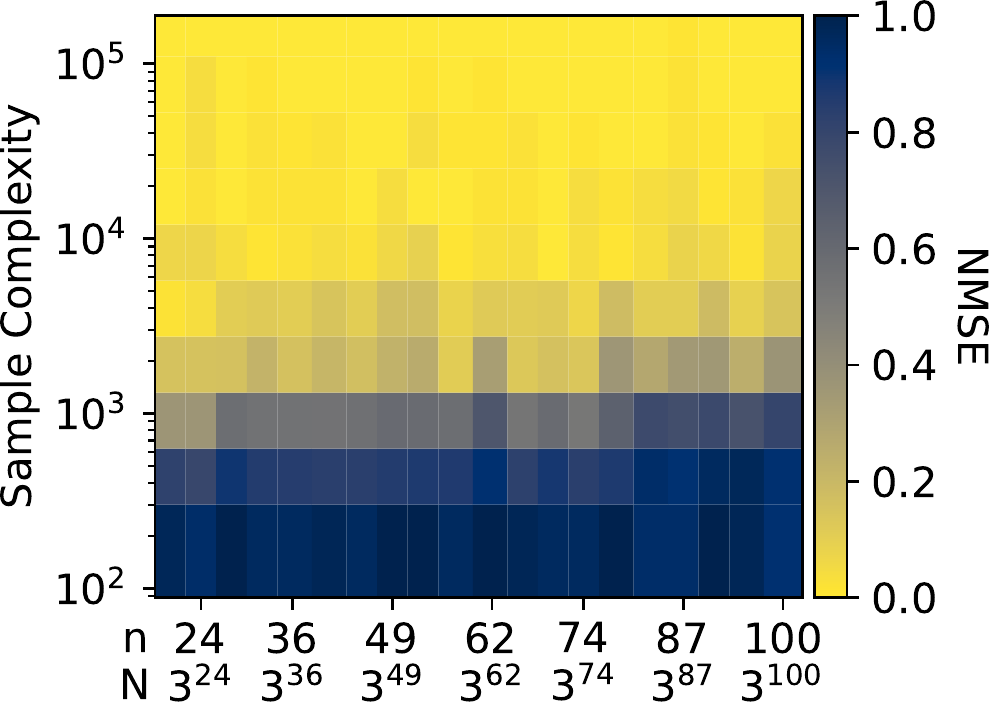}
     \end{subfigure}
    \hfill
     \begin{subfigure}[h]{0.48\textwidth}
         \centering
         \caption{\label{fig:b2}}
         \includegraphics[width=\textwidth]{complexity-vs-n-runtime.pdf}
     \end{subfigure}
    \caption{
    \textbf{(a)} NMSE of coded $q$-SFT for a range of sample complexity and $N$ values on synthetically generated data (note that the transition threshold appears sub-logarithmic in $N$),
    \textbf{(b)} runtime plotted against $N$ for the experiment in (a)}
\end{figure*}

\fi 
\subsection{Proof of the Aliasing Pattern}\label{apdx:aliasing_proof}

First, we assume $\mathbf{d}_p = \VEC{0}$. 
\begin{IEEEeqnarray}{rCl}
    U_{c,p}[\mathbf{j}] & = & 
    \frac{1}{B} \sum_{ \boldsymbol{\ell} \in \Z_q^b}  f[\mathbf{\mathbf{M}  \boldsymbol{\ell}}] \omega^{-\langle \mathbf{j} , \boldsymbol{\ell}\rangle}, \\
    & = &
        \frac{1}{B}\sum_{ \VEC{\ell} \in \Z_q^n} \left( \sum_{\mathbf{k} \in \Z_q^n} F[\mathbf{k}] \omega^{\langle \mathbf{M} \VEC{\ell} , \mathbf{k}\rangle} \right) \omega^{-\langle \mathbf{j} , \boldsymbol{\ell}\rangle}, \\
    & = &
        \frac{1}{B}\sum_{\mathbf{k} \in \Z_q^n} F[\mathbf{k}] \left( \sum_{\VEC{\ell} \in \Z_q^b}  \omega^{\langle \mathbf{M}^{\mathrm{T}}\mathbf{k} - \mathbf{j}, \VEC{\ell}\rangle} \right), \\
    & = & 
    \sum_{\mathbf{k}\;:\;\mathbf{M}^{\top} \mathbf{k} = \mathbf{j}} F[\mathbf{k}].
\end{IEEEeqnarray}
where the last inequality follows from Lemma \ref{lemma:zero}. Finally, by using the shifting property, for arbitrary $\mathbf{d}_{p}$, we have:
\begin{equation}
    U_{c,p}[\mathbf{j}] = \sum_{\mathbf{k}\;:\;\mathbf{M}^{\mathrm{T}} \mathbf{k} = \mathbf{j}} F[\mathbf{k}] \omega^{\langle \mathbf{d}_p, \mathbf{k} \rangle }. 
\end{equation}

\begin{lemma}
Let $\omega = \exp(-i 2\pi/q)$, and $\boldsymbol{a} \in \Z_q^b$. Then,\\
\begin{equation}
    0 = \sum_{\boldsymbol{\ell} \in \mathbb{F}_q^b} \omega^{\langle \boldsymbol{\ell} , \boldsymbol{a}\rangle} \iff \boldsymbol{a} \neq \boldsymbol{0}
\end{equation}
\label{lemma:zero}
\end{lemma}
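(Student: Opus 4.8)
\textbf{Proof proposal for Lemma \ref{lemma:zero}.}

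The plan is to evaluate the character sum directly by exploiting the product structure of the inner product over the module $\Z_q^b$. First I would write $\langle \boldsymbol{\ell}, \boldsymbol{a}\rangle = \sum_{i=1}^{b} \ell_i a_i$ so that the sum factors as
\begin{equation}
    \sum_{\boldsymbol{\ell} \in \Z_q^b} \omega^{\langle \boldsymbol{\ell}, \boldsymbol{a}\rangle} = \prod_{i=1}^{b} \left( \sum_{\ell_i \in \Z_q} \omega^{\ell_i a_i} \right).
\end{equation}
This reduces the claim to the one-dimensional statement: for $a \in \Z_q$, the geometric sum $\sum_{\ell=0}^{q-1} \omega^{\ell a}$ equals $q$ when $a \equiv 0 \pmod q$ and equals $0$ otherwise. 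For $a = 0$ every term is $1$, giving $q$. For $a \neq 0$, since $\omega^a$ is a $q$-th root of unity different from $1$ (here one uses that $\omega = e^{\pm 2\pi i/q}$ is a primitive $q$-th root of unity, so $\omega^a = 1$ iff $q \mid a$), the standard geometric series identity $\sum_{\ell=0}^{q-1} z^\ell = (z^q - 1)/(z - 1)$ with $z = \omega^a \neq 1$ and $z^q = 1$ forces the sum to vanish.

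Combining the factors: the product is nonzero (in fact equals $q^b = B$) precisely when every $a_i = 0$, i.e. when $\boldsymbol{a} = \boldsymbol{0}$, and is zero as soon as at least one coordinate $a_i \neq 0$, i.e. when $\boldsymbol{a} \neq \boldsymbol{0}$. This is exactly the biconditional asserted. A small point of care is the sign/normalization convention for $\omega$ (the lemma writes $\omega = \exp(-i2\pi/q)$ while the body of the paper uses $\omega = e^{j2\pi/q}$), but this is immaterial since $\bar\omega$ is also a primitive $q$-th root of unity and the argument applies verbatim; similarly $\mathbb{F}_q^b$ should read $\Z_q^b$, which matters only when $q$ is not prime but does not affect the computation since we never use multiplicative structure.

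There is no real obstacle here — the only thing to be careful about is invoking primitivity of $\omega$ correctly so that $\omega^a \neq 1$ for $a \in \{1, \dots, q-1\}$; everything else is the elementary factorization of a character sum over a product set and the finite geometric series. I would keep the write-up to these two observations (factorization, then the one-dimensional root-of-unity sum) and conclude.
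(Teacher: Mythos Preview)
Your proposal is correct and follows essentially the same route as the paper: factor the character sum over $\Z_q^b$ into a product of one-dimensional sums and then use that $\sum_{\ell \in \Z_q}\omega^{\ell a}$ equals $q$ if $a=0$ and $0$ otherwise. The paper is slightly terser (it simply invokes ``$\omega$ is the $q$-th root of unity'' rather than writing out the geometric series), but the argument is the same.
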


\begin{proof}
\begin{IEEEeqnarray}{rCl}
    \sum_{\boldsymbol{\ell} \in \Z_q^b} \omega^{\langle \boldsymbol{\ell} , \boldsymbol{a}\rangle} &=& \sum_{\boldsymbol{\ell} \in \Z_q^b} \omega^{\sum_{i=1}^{b} \ell_i a_i} \\
    &=& \sum_{\boldsymbol{\ell} \in \Z_q^b} \prod_{i=1}^{b} \omega^{\ell_i a_i} = \prod_{i=1}^{b}\sum_{\ell_i \in \Z_q} \omega^{\ell_i a_i}
\end{IEEEeqnarray}

For proving the forward direction, if $\boldsymbol{a} = \boldsymbol{0}$, then the right hand is equal to $B = q^b$ which is nonzero.
For proving the converse, if $\boldsymbol{a} \neq \boldsymbol{0}$, then at least for one $i$, we have $a_i \neq 0$. Since $\omega$ is the $q$-th root of unity, the $i$-th term of the product is equal to $\sum_{\ell_i \in \mathbb{F}_q} \omega^{\ell_i a_i} = 0$ making the right hand side equal to 0.
\end{proof}

\subsection{Proof of Theorem \ref{thm:noiseless}}
\label{apdx:peeling_thm}
We let $\mathcal{G}(S, \eta, C, \{\VEC{M}_c\}_{c \in [C]})$ represent the set of all bipartite graphs that are induced by subsampling with $B = \eta S$ and subsampling matrices $\{\VEC{M}_c\}_{c \in [C]}$. By Assumption \ref{ass:sparse}, the graph that we must decode over in Algorithm \ref{alg} is uniformly distributed over this set. 

By the choice of $\VEC{D}_c$, for any singleton index $U_c[\mathbf{j}]$, $F[\mathbf{k}]$ and $\mathbf{k}$ are retrieved with probability $1$ using the methods described in Section \ref{subsec:bin_detect}. What remains then is to show that taking $C$ to be $O(1)$ is sufficient for the peeling to peel every edge in the graph. Due to our uniformly distributed support assumption, the argument is essentially identical to the binary case in \cite{Li2015} Appendix B. In this section, we sketch the proof for $0 \leq \delta \leq 1/3$, but the other cases follow a similar argument, and are described in \cite{Li2015}.
In this case, $\mathbf{M}_c$ may be constructed as:
\begin{equation}\label{eq:degree_dist}
    \VEC{M}_c = [\VEC{0}_{b \times (c-1)}, \VEC{I}_{b \times b}, \VEC{0}_{b \times (n-cb)}]^{\mathrm{T}}, \;\; c\in[C]
\end{equation}
Choosing $\VEC{M}_c$ in this way ensures that the edge degree distribution is easily computed. Specifically, the number of edges connected to check nodes with degree $j$ when $\eta S = B$ is:
\begin{equation}
    \rho_j = \frac{({1}/{\eta})^{j-1}e^{-1/\eta}}{(j-1)!} ,\; j=0,\dotsc,S,
\end{equation}
and zero otherwise. For the next step, we define the neighborhood of an edge $e = (v, c)$, where $v$ and $c$ are the connected variable and check node respectively as follows: $\mathcal{N}_{e}^{\ell}$ is the induced sub-graph containing the edges and nodes of all paths $e_1,\dotsc, e_{\ell}$ where $v \in e_1$ and $e_1 \neq e$. If we consider an arbitrary edge in our decoding graph, such that $\mathcal{N}_e^{2i}$ is a tree, the probability that it will not be removed after the $i$th iteration $p_i$ can be written as: $p_i = \left( 1 - \sum_j \rho_j \left(1 - p_{i-1}\right)^{j-1}\right)^{C-1}$, for sufficiently large $S$ this is well approximated by
\begin{equation}
    p_i = \left(1 - e^{-\frac{1}{\eta}p_{i-1}}\right)^{C-1}.
\end{equation}
From this equation, we can see that for any choice of $C$ there exists an $\eta$ such that $p_i$ goes to zero as $i$ does. 
 Then let $\mathcal{T}_i$ denote the event that for every edge $e$ the neighbourhood $\mathcal{N}_e^{2i}$ is a tree, and let $Z_i$ represent the number of edges that are still not decoded after the $i$th peeling iteration. Let $Z_i = \sum_{e} Z_i^e$, where $Z_i^e = \mathds{1}\{\text{Edge}\; e\;\text{is not peeled}\}$.
We can now bound $\E\left[ Z_i\lvert \mathcal{T}_i\right]$ as follows:
\begin{equation}
    \E\left[ Z_i\lvert \mathcal{T}_i\right] = \sum_{e} \left[ Z_i\lvert \mathcal{T}_i\right] = CSp_i.
\end{equation}
Since it is possible to choose $\eta$ such that $p_i$ can be made arbitrarily small, we conclude that
for $\eps > 0$ there exist some $i$ such that $\E\left[ Z_i\lvert \mathcal{T}_i\right] = CS\eps/4$.
Then, via Lemma 6 in \cite{Li2015}, which is similar to \cite{Richardson2001}, we can show that for large $S$, $\mathcal{T}_i$ occurs with high probability. Thus we can bound the absolute difference between the conditional and unconditional mean as:
\begin{equation}
    \abs{\E\left[ Z_i\right] - \E\left[ Z_i\lvert \mathcal{T}_i\right]} \leq CS \eps/4,
\end{equation}
allowing us to conclude that there is some $i$ such that $\E\left[ Z_i\right] \leq CS\eps/2$ for $S$ greater than some constant. We can further show that $Z_i$ is well concentrated around its mean by constructing a suitable martingale, and applying Azuma's Inequality. This allows us to establish:
\begin{equation}\label{eq:th1_concentration}
    \mathrm{Pr}\left( \abs{Z_i - \E\left[ Z_i\right]} > CS\eps/2 \right) \leq 2\mathrm{exp}\left( -\beta \eps^2 S^{\frac{1}{4i+1}}\right),
\end{equation}
for some constant $\beta> 0$. Thus, we have shown that with high probability, our peeling phase terminates with the number of remaining edges $Z_i < CS\eps$ for any $\eps > 0$. Note now, if we could choose $\eps = 1/CS$, we would be done, however, this would cause our bound in \eqref{eq:th1_concentration} to be meaningless. Instead, we use graph expander properties to complete the argument. 

We call a bipartite graph in $\mathcal{G}(S, \eta, C, \{\VEC{M}_c\}_{c \in [C]})$ an $\eps$-expander if for all subsets $\mathcal{S}$ of variable nodes with $\abs{\mathcal{S}} < \eps S$, there exists a check node neighbourhood of $\mathcal{S}$ in one of the sub-sampling groups $c$ denoted as $\mathcal{N}_c(\mathcal{S})$ that satisfies $\abs{\mathcal{N}_c(\mathcal{S})} > \abs{S}/2$. It can be easily shown, based on \eqref{eq:degree_dist} that if the graph is uniformly chosen over $\mathcal{G}(S, \eta, C, \{\VEC{M}_c\}_{c \in [C]})$, it is an $\eps$-expander with probability at least $1 - O(1/S)$ if $C \geq 3$. This is done in Appendix B.7 of \cite{Li2015} using a counting argument and elementary inequalities.

Let the set of $Z_i$ remaining edges be connected to the variable nodes $\mathcal{S}$. A sufficient condition for all the right nodes in at least one group $\mathcal{N}_c(\mathcal{S})$ to have at least one singleton is that the corresponding average degree is less than 2, which implies that $\abs{S}/\abs{\mathcal{N}_c(\mathcal{S})} \leq 2$ and hence $\abs{\mathcal{N}_c(\mathcal{S})} \geq \abs{S}/2$. Since the graph is an expander with probability at least $1 - O(1/S)$ this condition is satisfied, and all the edges will be peeled with at least this probability. 

Note that the number of samples required is $O(PCB) = O(Sn)$. The total number of computations is dominated by the subsampling, which requires $O(PB\log B) = O(PS\log S) = O(PS\log N) $ because $S = O(N^\delta)$. As a result, this gives a total complexity of $O(PSn\log q) = O(S n^2 \log q)$. 
\qed
\subsection{Proof of Proposition \ref{prop:quant}}
\label{proof:prop_quant}

Given a singleton bin with an index-value pair $(\VEC{k}, F[\VEC{k}])$, we can write
\begin{equation*}
    \text{arg}[U_{p}] = \text{arg}[F[\VEC{k}]] + \frac{2 \pi}{q}\langle \VEC{d}_{p}, \VEC{k} \rangle + Y_{p}
\end{equation*}
where the additions are modulo-$2 \pi$ and $Y_{p}$ is a random variable over $[-\pi, \pi)$ that satisfies
\begin{align*}
    \mathrm{Pr}(|Y_{p}| \geq \alpha) &\leq \mathrm{Pr}(|W_{p}| \geq |F[\VEC{k}]| \sin{(\alpha)})\\
    &\leq \exp \left(-\frac{|F[\VEC{k}]|^2 \sin^2{(\alpha)}}{2\sigma^2/B} \right)\\
    &= \exp \left(-\frac{\eta \sin^2{(\alpha)}}{2} \text{SNR} \right)
\end{align*}
for $0 \leq \alpha \leq \pi/2$. Similarly, we can write
\begin{equation*}
    \text{arg}[U_{p,r}] = \text{arg}[F[\VEC{k}]] + \frac{2 \pi}{q}\langle \VEC{d}_{p,r}, \VEC{k} \rangle + Y_{p,r}
\end{equation*}
for a random variable $Y_{p,r}$ over $[-\pi, \pi)$ that satisfies
\begin{align*}
    \mathrm{Pr}(|Y_{p,r}| \geq \alpha) \leq \exp \left(-\frac{\eta \sin^2{(\alpha)}}{2} \text{SNR} \right)
\end{align*}
for $0 \leq \alpha \leq \pi/2$. Then,
we can write 
\begin{equation*}
    \text{arg}[U_{p,r} / U_{p}] = \frac{2 \pi}{q}\langle \VEC{e}_{r}, \VEC{k} \rangle + Y_{p, r} - Y_{p}.
\end{equation*}
Therefore,
\begin{align*}
    \mathrm{Pr}(Z_{p,r} \neq 0) &= \mathrm{Pr}(\text{arg}_q[U_{p,r} / U_{p}] \neq \langle \VEC{e}_{r}, \VEC{k} \rangle)\\
    &\leq \mathrm{Pr}(|Y_{p}| \geq \pi/2q) + \mathrm{Pr}(|Y_{p,r}| \geq \pi/2q)\\
    &\leq 2 \exp \left(-\frac{\eta \sin^2{(\pi/ 2q)}}{2} \text{SNR} \right).\hfill \qed
\end{align*}
\subsection{Proof of Theorem \ref{thm:noisy}}
\label{apdx:peeling_thm_noisy}

As stated in the theorem, we assume the alphabet size $q$ is a fixed constant throughout this proof. The success rate of the algorithm depends on each bin $\VEC{j}$ to be processed correctly, meaning that each bin is correctly identified as a zero-ton, singleton or multi-ton. Define $\mathcal{E}$ as the the error event where bin detector makes a mistake in $O(S)$ peeling iterations. If the error probability satisfies $\mathrm{Pr}(\mathcal{E}) \leq O(1/S)$, the probability of failure of peeling decoder can be written as
\begin{align*}
    \mathbb{P}_F &= \mathrm{Pr}\left(\widehat{F} \neq F | \mathcal{E}^{\mathrm{c}} \right) \mathrm{Pr}(\mathcal{E}^{\mathrm{c}}) + \mathrm{Pr}\left(\widehat{F} \neq F | \mathcal{E} \right) \mathrm{Pr}(\mathcal{E})\\
    &\leq \mathrm{Pr}\left(\widehat{F} \neq F | \mathcal{E}^{\mathrm{c}} \right) + \mathrm{Pr}(\mathcal{E})\\
    &= O(1/S)
\end{align*}
where the first term in the last inequality is obtained from Theorem \ref{thm:noiseless} for the peeling decoder with an oracle such that the event $\mathcal{E}^{\mathrm{c}}$ holds.

Then, we define $\mathcal{E}_b$ as the the error event where a bin $\VEC{j}$ is decoded wrongly and then, using a union bound over different bins and different iterations, the probability of the algorithm making a mistake in bin identification satisfies
\begin{equation*}
    \mathrm{Pr}(\mathcal{E}) \leq \text{(\# of iterations)} \times \text{(\# of bins)} \times \mathrm{Pr}(\mathcal{E}_b)
\end{equation*}

The number of bins is $\eta S$ and the number of iterations is at most $CS$ (at least one edge is peeled off at each iteration in the worst case). Hence, $\mathrm{Pr}(\mathcal{E}) \leq \eta C S^2 \mathrm{Pr}(\mathcal{E}_b)$. In order to satisfy $\mathrm{Pr}(\mathcal{E}) \leq O(1/S)$, we need to show that $\mathrm{Pr}(\mathcal{E}_b) \leq O(1/S^3)$.

In the following, we prove that $\mathrm{Pr}(\mathcal{E}_b) \leq O(1/S^3)$ holds using the observation model. In the following analysis, we consider separate cases where the bin in consideration is fixed as a zero-ton, singleton or multi-ton.

\begin{proposition}
The error probability $\mathrm{Pr}(\mathcal{E}_b)$ for an arbitrary bin can be upper bounded as $\mathrm{Pr}(\mathcal{E}_b) \leq O(1/S^3)$.
\end{proposition}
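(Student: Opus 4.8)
The plan is to bound $\mathrm{Pr}(\mathcal{E}_b)$ by conditioning on the true type of bin $\VEC{j}$ and showing each conditional error probability is $O(1/S^3)$. Recall that since $q$ is a fixed constant and $\mathrm{SNR}$ is a constant, $\mathbb{P}_e = 2e^{-\zeta \mathrm{SNR}/2}$ is a fixed constant in $(0,1)$; the key leverage is that $P_1 = \Theta(n) = \Theta(\log_q N) = \Theta(\log S / \log q)$, so quantities of the form $e^{-c P_1}$ or $\binom{q-1}{\cdot}\mathbb{P}_e^{P_1}$ decay polynomially in $S$, and by taking the relevant constant (the number of random offsets, or equivalently the constant hidden in $P = P_1 n$ versus $n$, or a large enough multiplicative factor) large enough we can push the exponent past $3$.

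\textbf{Case analysis.} First, for a \emph{zero-ton}: under $\mathcal{H}_Z$ we have $\VEC{U} = \VEC{W}$ with $\VEC{W}$ complex Gaussian, covariance $\nu^2 \VEC{I}$, so $\frac{1}{P}\|\VEC{U}\|^2$ concentrates around $\nu^2$; a standard chi-squared (Bernstein / Laurent--Massart) tail bound gives $\mathrm{Pr}(\frac{1}{P}\|\VEC{U}\|^2 > (1+\gamma)\nu^2) \leq e^{-c\gamma^2 P}$, which is $O(1/S^3)$ for $P = \Omega(\log S)$ and $\gamma$ a fixed constant. Second, for a \emph{multi-ton}: the danger is that the singleton verification test passes spuriously, i.e.\ the residual $\frac{1}{P}\|\VEC{U} - \widehat{F}[\widehat{\VEC{k}}]\VEC{s}_{\widehat{\VEC{k}}}\|^2$ falls below $(1+\gamma)\nu^2$ even though two or more coefficients alias into the bin. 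Here I would argue that the true signal part $\VEC{S}\VEC{\alpha}$ with $\|\VEC{\alpha}\|_0 \geq 2$ has norm bounded below (the codewords $\VEC{s}_{\VEC{k}}$ are "spread out" because the offsets are uniform random, so distinct code vectors are nearly orthogonal with high probability), so after subtracting any single rank-one term a constant fraction of the energy remains, and the residual exceeds the threshold except on an event of probability $e^{-\Omega(P)}$; a union bound over the $q^{n-b}$ candidate $\widehat{\VEC{k}}$ costs only a factor $N$, absorbed by taking $P = \Omega(\log S)$ with a large enough constant, or handled directly in the sub-linear scheme since the majority-vote index estimate does not enumerate. Third, for a \emph{singleton} $(\VEC{k}, F[\VEC{k}])$: we must show (a) the index is recovered correctly, (b) the value is quantized correctly, and (c) the verification test passes. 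For (a) in the sub-linear scheme, Proposition~\ref{prop:quant} gives that each symbol estimate $\text{arg}_q[U_{p,r}/U_p]$ equals $k[r]$ with probability at least $1 - \mathbb{P}_e > 1/2$ and, crucially, every wrong value $i \neq 0$ has probability strictly less than $p_0$; hence the majority vote over $P_1$ i.i.d.\ trials fails for symbol $r$ only with probability $e^{-\Omega(P_1)}$ by a Chernoff / Hoeffding bound on the gap between the vote count for the truth and for any fixed competitor, and a union bound over the $n$ symbols and $q-1$ competitors gives failure probability $\leq n(q-1)e^{-\Omega(P_1)} = O(1/S^3)$ once $P_1 = \Omega(\log S)$ with adequate constant. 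For (a) in the near-linear scheme one instead bounds the probability that the MLE $\widehat{\VEC{k}} \neq \VEC{k}$ by a union bound over the $q^{n-b}-1$ wrong indices of the pairwise error $\mathrm{Pr}(\|\VEC{U}-\widehat\alpha[\VEC{k}']\VEC{s}_{\VEC{k}'}\|^2 \leq \|\VEC{U}-\widehat\alpha[\VEC{k}]\VEC{s}_{\VEC{k}}\|^2)$, each of which is $e^{-\Omega(P \cdot \mathrm{SNR})}$ because $\|F[\VEC{k}](\VEC{s}_{\VEC{k}}-\VEC{s}_{\VEC{k}'})\|^2 = \Theta(P |F[\VEC{k}]|^2)$ whp over random offsets; the $q^{n-b} \leq N$ union-bound factor is beaten by $P = \Omega(n)$ since $|F[\VEC{k}]|^2 \asymp \rho^2 \asymp 1/S$ and $\mathrm{SNR} = S\rho^2/\sigma^2$ is constant, giving a per-index exponent linear in $P/\eta \asymp n$. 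Given the correct index, (b) follows because $\VEC{s}_{\widehat{\VEC{k}}}^\textrm{T}\VEC{U}/P = F[\VEC{k}] + (\text{noise of variance } \nu^2/P)$, and since the alphabet $\mathcal{X}$ has constant minimum separation $\Theta(\rho)$ while the noise std is $\nu/\sqrt{P} = \sigma/\sqrt{BP} \asymp \rho/\sqrt{\eta P} = O(\rho/\sqrt{n})$, misquantization has probability $e^{-\Omega(P)}$. Finally (c): conditioned on correct index and value, the residual is just the projected noise $\VEC{W}$ minus a rank-one noise term, and the verification threshold test behaves as in the zero-ton case, failing with probability $e^{-\Omega(\gamma^2 P)}$.

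\textbf{Combining and the main obstacle.} Collecting the three cases, $\mathrm{Pr}(\mathcal{E}_b) \leq O(1/S^3)$ provided the constant in $P = n$ (near-linear scheme) or $P_1 = \Theta(n)$ (sub-linear scheme) is chosen large enough; plugging into $\mathrm{Pr}(\mathcal{E}) \leq \eta C S^2 \mathrm{Pr}(\mathcal{E}_b) = O(1/S)$ closes the argument, and the sample/computation counts follow exactly as in Theorem~\ref{thm:noiseless} with the extra factor of $n$ (resp.\ $n^2$) from the offsets. The main obstacle I anticipate is the multi-ton case: unlike the zero-ton and singleton analyses, which reduce cleanly to Gaussian tail bounds, ruling out a spurious singleton declaration requires controlling $\min_{\alpha, \VEC{k}'} \|\VEC{S}\VEC{\alpha} + \VEC{W} - \alpha\VEC{s}_{\VEC{k}'}\|^2$ uniformly, which demands a quantitative "restricted isometry"-style statement about the random code $\VEC{S}$ — namely that no codeword lies too close to the span of any other single codeword — together with care that the $q^{n-b}$-size union bound over competitors $\VEC{k}'$ is affordable; this is where the constant multiplying $P$ (equivalently, the precise number of random offsets) must be pinned down, and where the interplay between $\mathrm{SNR}$, $\eta$, $\gamma$, and the alphabet geometry is most delicate.
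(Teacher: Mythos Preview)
Your proposal is correct and follows essentially the same route as the paper: decompose $\mathrm{Pr}(\mathcal{E}_b)$ by bin type, handle the zero-ton and singleton-verification steps via (non-)central chi-squared tail bounds, handle singleton index recovery via Proposition~\ref{prop:quant} plus a Hoeffding/majority-vote argument (sub-linear) or a pairwise MLE union bound (near-linear), and correctly flag the multi-ton $\to$ singleton case as the delicate one requiring a coherence/RIP-type control on the random codebook $\VEC{S}$. The paper's organization differs only cosmetically (it groups errors as ``missed/false/crossed verification'' rather than conditioning on the true type), and its multi-ton analysis makes your anticipated RIP step concrete via a mutual-coherence bound combined with Gershgorin for bounded multi-ton size and a CLT for growing size; one small omission in your plan is the event that a true singleton passes the \emph{zero-ton} test, but this is the easiest case (a non-central chi-squared lower tail) and is handled by exactly the machinery you already invoke.
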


\begin{proof}
The error probability $\mathrm{Pr}(\mathcal{E}_b)$ for an arbitrary bin can be upper bounded as
\begin{align*}
    \mathrm{Pr}(\mathcal{E}_b) \leq &\sum_{\mathcal{F} \in \{\mathcal{H}_Z, \mathcal{H}_M\} } \mathrm{Pr}(\mathcal{F} \leftarrow \mathcal{H}_S(\VEC{k}, F[\VEC{k}]))\\
    + &\sum_{\mathcal{F} \in \{\mathcal{H}_Z, \mathcal{H}_M\} } \mathrm{Pr}(\mathcal{H}_S(\widehat{\VEC{k}}, \widehat{F}[\widehat{\VEC{k}}]) \leftarrow \mathcal{F})\\
    + &\mathrm{Pr}(\mathcal{H}_S(\widehat{\VEC{k}}, \widehat{F}[\widehat{\VEC{k}}]) \leftarrow \mathcal{H}_S(\VEC{k}, F[\VEC{k}]))   \end{align*}
where the events refer to:
\begin{enumerate}
    \item $\{ \mathcal{F} \leftarrow \mathcal{H}_S(\VEC{k}, F[\VEC{k}]) \}$:  missed verification in which the singleton verification fails when the ground truth is in fact a singleton.
    \item $\{\mathcal{H}_S(\widehat{\VEC{k}}, \widehat{F}[\widehat{\VEC{k}}]) \leftarrow \mathcal{F}\}$: false verification in which the singleton verification is passed when the ground truth is not a singleton.
    \item $\{ \mathcal{H}_S(\widehat{\VEC{k}}, \widehat{F}[\widehat{\VEC{k}}]) \leftarrow \mathcal{H}_S(\VEC{k}, F[\VEC{k}]) \}$: crossed verification in which a singleton with a wrong index-value pair passes the singleton verification when the ground truth is another singleton pair.
\end{enumerate}

    Since all the error probabilities decay exponentially with respect to $P_1$, it is clear that if $P_1$ is chosen as $P_1 = O(n) = O(\log N)$, the probability can be bounded as $\mathrm{Pr}(\mathcal{E}_b) \leq O(1/N^3)$. Then, the result follows by noting $S \leq N$.

Note that the number of samples required is $O(PB) = O(PS)$. For near-linear time bin detection, the total number of computations is dominated by the singleton search which requires $O(S N n) = O(S n q^n)$. On the other hand, for sublinear time bin detection, the total number of computations is dominated by the subsampling, which requires $O(PB\log B) = O(PS\log S) = O(PS\log N) $ because $S = O(N^\delta)$. As a result, this gives a total complexity of $O(PSn) = O(Sn^3)$. 
\end{proof}

\begin{proposition}[False Verification Rate]
    For $0 < \gamma < \frac{\eta}{2} \mathrm{SNR}$, the false verification rate for each bin hypothesis satisfies:
    \begin{align*}
        \mathrm{Pr}(\mathcal{H}_S(\widehat{\VEC{k}}, \widehat{F}[\widehat{\VEC{k}}]) \leftarrow \mathcal{H}_Z) &\leq e^{-\frac{P_1}{2}(\sqrt{1+2 \gamma} - 1)^2},\\
        \mathrm{Pr}(\mathcal{H}_S(\widehat{\VEC{k}}, \widehat{F}[\widehat{\VEC{k}}]) \leftarrow \mathcal{H}_M) &\leq e^{- \frac{P_1 \gamma^2}{2 (1 + 4\gamma)}} + 4 N^2 e^{-\epsilon \left( 1 - \frac{2 \gamma \nu^2}{L \rho^2} \right)^2 P_1},
    \end{align*}
where $P_1$ is the number of the random offsets in the near-linear time  and sublinear time robust bin detection algorithms.
\end{proposition}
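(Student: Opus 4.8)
The plan is to prove the two inequalities separately, since the ``false verification'' event has a qualitatively different cause in the zero-ton and multi-ton cases. \textbf{Zero-ton case.} The key observation is that $\{\mathcal{H}_S(\widehat{\VEC{k}},\widehat{F}[\widehat{\VEC{k}}]) \leftarrow \mathcal{H}_Z\}$ is contained in the event that the zero-ton test \emph{fails} on pure noise: when the ground truth is a zero-ton, $\VEC{U}=\VEC{W}$, and the algorithm declares $\mathcal{H}_Z$ and halts as soon as $\tfrac{1}{P_1}\|\VEC{W}\|^2 \le (1+\gamma)\nu^2$, so entering the singleton branch at all forces $\tfrac1{P_1}\|\VEC{W}\|^2 > (1+\gamma)\nu^2$. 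It therefore suffices to bound $\mathrm{Pr}(\|\VEC{W}\|^2 > P_1(1+\gamma)\nu^2)$. Since $\VEC{W}\sim\mathcal{CN}(\VEC{0},\nu^2\VEC{I}_{P_1})$, the variable $2\|\VEC{W}\|^2/\nu^2$ is $\chi^2$ with $2P_1$ degrees of freedom, and substituting $k=2P_1$ and $x=\tfrac{P_1}{2}(\sqrt{1+2\gamma}-1)^2$ into the Laurent--Massart tail bound $\mathrm{Pr}(\chi^2_k \ge k+2\sqrt{kx}+2x)\le e^{-x}$ — for which one checks $k+2\sqrt{kx}+2x=2P_1(1+\gamma)$ — gives exactly $e^{-\frac{P_1}{2}(\sqrt{1+2\gamma}-1)^2}$.

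\textbf{Multi-ton case: reduction.} Now the ground truth is $\VEC{U}=\VEC{S}\VEC{\alpha}+\VEC{W}$ with $\|\VEC{\alpha}\|_0=L\ge 2$ and $|\alpha[\VEC{k}]|=\rho$ on its support (Assumption~\ref{ass:sparse2}). A false verification means the estimated pair $(\widehat{\VEC{k}},\widehat{F}[\widehat{\VEC{k}}])$ passes the residual test, i.e. $\tfrac1{P_1}\|\VEC{U}-\widehat{F}[\widehat{\VEC{k}}]\VEC{s}_{\widehat{\VEC{k}}}\|^2 \le (1+\gamma)\nu^2$. Writing $\widetilde{\VEC{\alpha}} := \VEC{\alpha}-\widehat{F}[\widehat{\VEC{k}}]\VEC{e}_{\widehat{\VEC{k}}}$, the residual equals $\VEC{S}\widetilde{\VEC{\alpha}}+\VEC{W}$, and since $\widehat{F}[\widehat{\VEC{k}}]\in\mathcal{X}$ also has modulus $\rho$, removing one coordinate still leaves $\|\widetilde{\VEC{\alpha}}\|^2 \ge (L-1)\rho^2 \ge L\rho^2/2$. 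By the reverse triangle inequality the test passing forces $\|\VEC{S}\widetilde{\VEC{\alpha}}\| \le \|\VEC{W}\|+\sqrt{P_1(1+\gamma)}\,\nu$; intersecting this with the upper-tail event $\|\VEC{W}\|^2 \le P_1(1+2\gamma)\nu^2$ — whose complement has probability at most $e^{-\frac{P_1\gamma^2}{2(1+4\gamma)}}$ by the same $\chi^2$ estimate, and which contributes the first term — shows $\|\VEC{S}\widetilde{\VEC{\alpha}}\|^2 = O(\gamma\,P_1\nu^2)$, i.e. $\|\VEC{S}\widetilde{\VEC{\alpha}}\|^2$ must undershoot its mean $\E_{\VEC{S}}\|\VEC{S}\widetilde{\VEC{\alpha}}\|^2 = P_1\|\widetilde{\VEC{\alpha}}\|^2$ by a multiplicative factor at most roughly $1-\tfrac{2\gamma\nu^2}{L\rho^2}$; note that the hypothesis $\gamma<\tfrac{\eta}{2}\mathrm{SNR}$, together with $\rho^2/\nu^2=\eta\,\mathrm{SNR}$, guarantees this factor is positive.

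\textbf{Multi-ton case: concentration and union bound.} It remains to show the undershoot is very unlikely, uniformly in the candidate $(\widehat{\VEC{k}},\widehat{F}[\widehat{\VEC{k}}])$. Fixing the candidate, $\VEC{S}\widetilde{\VEC{\alpha}}$ has independent coordinates $Y_p=\sum_{\VEC{k}}\widetilde{\alpha}[\VEC{k}]\,\omega^{\langle\VEC{d}_p,\VEC{k}\rangle}$ (because the offsets $\VEC{d}_p$ are i.i.d.\ uniform over $\Z_q^n$), each with $\E|Y_p|^2 = \|\widetilde{\VEC{\alpha}}\|^2$ and $|Y_p|\le\|\widetilde{\VEC{\alpha}}\|_1$ almost surely, so a Chernoff/Hoeffding-type lower-tail bound yields $\mathrm{Pr}(\|\VEC{S}\widetilde{\VEC{\alpha}}\|^2 \le (1-t)P_1\|\widetilde{\VEC{\alpha}}\|^2) \le e^{-c\,t^2 P_1}$ for an absolute constant $c$, applied with $t\approx 1-\tfrac{2\gamma\nu^2}{L\rho^2}$. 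A union bound over the at most $N/B<N$ admissible locations $\widehat{\VEC{k}}$ for a given bin, the $\kappa=O(1)$ choices of $\widehat{F}[\widehat{\VEC{k}}]\in\mathcal{X}$, and — since $\widetilde{\VEC{\alpha}}$ also couples the candidate to the $L\le N$ true support points — effectively $O(N^2)$ index pairs, is absorbed into the prefactor and gives the second term $4N^2 e^{-\epsilon(1-\frac{2\gamma\nu^2}{L\rho^2})^2 P_1}$.

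The delicate step is the multi-ton concentration: the lower-tail bound on $\|\VEC{S}\widetilde{\VEC{\alpha}}\|^2$ must be sharp enough that the exponent $\epsilon(1-\tfrac{2\gamma\nu^2}{L\rho^2})^2 P_1$ beats the $O(N^2)=O(q^{2n})$ union-bound factor — which is precisely why $P_1=\Theta(n)$ offsets suffice in the downstream theorem — and it must hold for \emph{every} candidate residual $\widetilde{\VEC{\alpha}}$ simultaneously, which is where the independence and boundedness of the coordinates of $\VEC{S}\widetilde{\VEC{\alpha}}$ (offsets uniform over $\Z_q^n$), and the independence of $\VEC{S}$ from the additive noise $\VEC{W}$, are essential. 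By contrast, the zero-ton bound is essentially a one-line $\chi^2$ tail estimate once the containment observation is made.
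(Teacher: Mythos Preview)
Your zero-ton argument is correct and matches the paper: both reduce to a central $\chi^2$ upper tail on $\|\VEC{W}\|^2$.

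The multi-ton argument, however, has two genuine gaps.

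\textbf{The reverse triangle inequality does not give what you claim.} On the event $\{\text{test passes}\}\cap\{\|\VEC{W}\|^2\le P_1(1+2\gamma)\nu^2\}$ you obtain only
\[
\|\VEC{S}\widetilde{\VEC{\alpha}}\|\le\bigl(\sqrt{1+\gamma}+\sqrt{1+2\gamma}\bigr)\sqrt{P_1}\,\nu\approx 2\sqrt{P_1}\,\nu,
\]
so $\|\VEC{S}\widetilde{\VEC{\alpha}}\|^2/P_1\lesssim 4\nu^2$, not $O(\gamma\nu^2)$ --- the noise has typical norm $\sqrt{P_1}\,\nu$, not $\sqrt{\gamma P_1}\,\nu$. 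Consequently the threshold $2\gamma\nu^2$ and the factor $(1-\tfrac{2\gamma\nu^2}{L\rho^2})$ in the proposition cannot be recovered from your decoupling. The paper instead conditions on the signal event $\{\|\VEC{g}\|^2/P_1\ge 2\gamma\nu^2\}$ with $\VEC{g}=\VEC{S}\widetilde{\VEC{\alpha}}$, and applies a \emph{non-central} $\chi^2$ lower-tail bound directly to $\|\VEC{g}+\VEC{W}\|^2$; this is precisely what yields both the term $e^{-P_1\gamma^2/(2(1+4\gamma))}$ and the threshold $2\gamma\nu^2$ for the complementary event.

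\textbf{The Hoeffding constant is not absolute.} Your bound on the undershoot $\mathrm{Pr}(\|\VEC{S}\widetilde{\VEC{\alpha}}\|^2\le(1-t)P_1\|\widetilde{\VEC{\alpha}}\|^2)$ uses $|Y_p|\le\|\widetilde{\VEC{\alpha}}\|_1\lesssim L\rho$, so Hoeffding produces an exponent of order $t^2 P_1\|\widetilde{\VEC{\alpha}}\|_2^4/\|\widetilde{\VEC{\alpha}}\|_1^4\asymp t^2 P_1/L^2$: the constant $c$ decays like $1/L^2$ and is not absolute. The paper addresses this by splitting into two regimes. For $L=O(1)$ it lower-bounds $\|\VEC{S}_{\mathcal{L}}\VEC{\beta}_{\mathcal{L}}\|^2$ via Gershgorin, $\lambda_{\min}(\tfrac1{P_1}\VEC{S}_{\mathcal{L}}^{\mathrm{H}}\VEC{S}_{\mathcal{L}})\ge 1-L\mu$, and controls the mutual coherence $\mu$ uniformly by a Hoeffding bound over all $\binom{N}{2}$ column pairs --- this is where the $4N^2$ prefactor actually arises, not from a union bound over candidate estimates. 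For $L=\omega(1)$ it argues $\VEC{g}=\VEC{S}_{\mathcal{L}}\VEC{\beta}_{\mathcal{L}}$ is asymptotically Gaussian with covariance $L\rho^2\VEC{I}$ and applies a central $\chi^2$ lower tail. Only by combining the two regimes does one obtain a single absolute $\epsilon>0$.
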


\begin{proof}
 The probability of detecting a zero-ton as a singleton can be upper bounded by the probability of a zero-ton failing the zero-ton verification. Thus,
 \begin{align*}
     \mathrm{Pr}(\mathcal{H}_S(\widehat{\VEC{k}}, \widehat{F}[\widehat{\VEC{k}}]) \leftarrow \mathcal{H}_Z) &\leq  \mathrm{Pr}\left( \frac{1}{P_1} \|\VEC{W}\|^2 \geq (1+\gamma)\nu^2 \right)\\
     &\leq e^{-\frac{P_1}{2}(\sqrt{1+2 \gamma} - 1)^2}
 \end{align*}
by noting that $\VEC{W} \sim \mathcal{CN}(0, \nu^2 \VEC{I})$ and applying Lemma \ref{non-central-tail-bound}.

On the other hand, given some multi-ton observation $\VEC{U} = \VEC{S} \VEC{\alpha} + \VEC{W}$, the probability of detecting it as a singleton with index-value pair $(\widehat{\VEC{k}}, \widehat{F}[\widehat{\VEC{k}}] )$ can be written as
\begin{align*}
    \mathrm{Pr}(\mathcal{H}_S(\widehat{\VEC{k}}, \widehat{F}[\widehat{\VEC{k}}]) \leftarrow \mathcal{H}_Z) = \mathrm{Pr}\left( \frac{1}{P_1} \left\| \VEC{g} + \VEC{v} \right\|^2 \leq (1+\gamma)\nu^2\right),
\end{align*}
where $\VEC{g} := \VEC{S}(\VEC{\alpha} - \widehat{F}[\widehat{\VEC{k}}] \VEC{e}_{\widehat{\VEC{k}}})$ and $\VEC{v} := \VEC{W}$. Then, we can upper bound this probability by
\begin{align*}
    \mathrm{Pr}\left( \frac{1}{P_1} \left\| \VEC{g} + \VEC{v} \right\|^2 \leq (1+\gamma)\nu^2 \middle| \frac{\|\VEC{g}\|^2}{P_1} \geq 2 \gamma \nu^2 \right)\\ + \mathrm{Pr}\left( \frac{\|\VEC{g}\|^2}{P_1} \leq 2 \gamma \nu^2 \right).
\end{align*}

Using Lemma \ref{non-central-tail-bound}, the first term is upper bounded by $
    \exp\{- (P_1 \gamma^2)/(2 (1 + 4\gamma))\}$. To analyze the second term, we let $\VEC{\beta} = \VEC{\alpha} - \widehat{F}[\widehat{\VEC{k}}] \VEC{e}_{\widehat{\VEC{k}}}$ and write $\VEC{g} = \VEC{S} \VEC{\beta}$. Denoting its support as $\mathcal{L} := \mathrm{supp}(\VEC{\beta})$, we can further write $\VEC{S} \VEC{\beta} = \VEC{S}_{\mathcal{L}} \VEC{\beta}_{\mathcal{L}}$ where $\VEC{S}_{\mathcal{L}}$ is the sub-matrix 
 of $\VEC{S}$ consisting of the columns in $\mathcal{L}$ and $\VEC{\beta}_{\mathcal{L}}$ is the sub-vector consisting of the elements in $\mathcal{L}$. Then, we consider two scenarios:
\begin{itemize}
    \item The multi-ton size is a constant, i.e., $|\mathcal{L}| = L = O(1)$. In this case, we have
    \begin{equation*}
        \lambda_{\mathrm{min}}(\VEC{S}_{\mathcal{L}}^\mathrm{H} \VEC{S}_{\mathcal{L}}) \| \VEC{\beta}_{\mathcal{L}}\|^2 \leq \|\VEC{S}_{\mathcal{L}} \VEC{\beta}_{\mathcal{L}}\|^2
    \end{equation*}
    Using $\| \VEC{\beta}_{\mathcal{L}}\|^2 \geq L \rho^2$, the probability can be bounded as
    \begin{equation*}
        \mathrm{Pr}\left( \frac{\|\VEC{g}\|^2}{P_1} \leq 2 \gamma \nu^2 \right) \leq \mathrm{Pr}\left( \lambda_{\mathrm{min}} \left(\frac{1}{P_1} \VEC{S}_{\mathcal{L}}^\mathrm{H} \VEC{S}_{\mathcal{L}} \right) \leq \frac{2 \gamma \nu^2}{L \rho^2} \right)
    \end{equation*}
    Then, according to the Gershgorin Circle Theorem,
    \begin{equation*}
         \lambda_{\mathrm{min}}\left(\frac{1}{P_1} \VEC{S}_{\mathcal{L}}^\mathrm{H} \VEC{S}_{\mathcal{L}}\right) \geq 1 - L \mu,
    \end{equation*}
    where $\mu$ is the mutual coherence of $\VEC{S}$ defined in Lemma \ref{mutual_coherence}. Using the bound for $\mu$ and letting $\mu_0 = \frac{1}{L} \left( 1 - \frac{2 \gamma \nu^2}{L \rho^2} \right)$,
    \begin{align*}
        \mathrm{Pr}\left( \frac{\|\VEC{g}\|^2}{P_1} \leq 2 \gamma \nu^2 \right) \leq 4 N^2 e^{-\frac{1}{8L^2} \left( 1 - \frac{2 \gamma \nu^2}{L \rho^2} \right)^2 P_1}
    \end{align*}
    which holds as long as $\gamma < L \rho^2 / 2 \nu^2 = \frac{L \eta}{2} \mathrm{SNR}$.

    \item The multi-ton size grows asymptotically with respect to S, i.e., $|\mathcal{L}| = L = \omega(1)$. As a result, the vector of random variables $\VEC{g} = \VEC{S}_{\mathcal{L}} \VEC{\beta}_{\mathcal{L}}$ becomes asymptotically Gaussian due to the central limit theorem with zero mean and a covariance
    \begin{equation*}
        \mathbb{E}[\VEC{g} \VEC{g}^\mathrm{H}] = L \rho^2 \VEC{I}
    \end{equation*}
    Therefore, by Lemma \ref{non-central-tail-bound}, we have
    \begin{align*}
        \mathrm{Pr}\left( \frac{\|\VEC{g}\|^2}{P_1} \leq 2 \gamma \nu^2 \right) \leq e^{- \frac{P_1}{2} \left( 1 - \frac{2 \gamma \nu^2}{L \rho^2} \right)}
    \end{align*}
    which holds as long as $\gamma < L \rho^2 / 2 \nu^2 = \frac{L \eta}{2} \mathrm{SNR}$.
\end{itemize}    

By combining the results from both cases, there exists some absolute constant $\epsilon > 0$ such that 
\begin{align*}
    \mathrm{Pr}\left( \frac{\|\VEC{g}\|^2}{P_1} \leq 2 \gamma \nu^2 \right) \leq 4 N^2 e^{-\epsilon \left( 1 - \frac{2 \gamma \nu^2}{L \rho^2} \right)^2 P_1}
\end{align*}
as long as $\gamma < \rho^2 / 2 \nu^2 = \frac{\eta}{2} \mathrm{SNR}$.
\end{proof}

\begin{proposition}[Missed Verification Rate]
    For $0 < \gamma < \frac{\eta}{2} \mathrm{SNR}$, the missed verification rate for each bin hypothesis satisfies
    \begin{align*}
        &\mathrm{Pr}( \mathcal{H}_Z \leftarrow \mathcal{H}_S(\VEC{k}, F[\VEC{k}])) \leq e^{-\frac{P_1}{2}\frac{(\rho^2/\nu^2 - \gamma)^2}{1 + 2\rho^2/\nu^2}}\\
        &\mathrm{Pr}(\mathcal{H}_M \leftarrow \mathcal{H}_S(\VEC{k}, F[\VEC{k}]) )\\
        &\quad \leq e^{-\frac{P_1}{2}(\sqrt{1+2 \gamma} - 1)^2} + 2 e^{- \frac{\rho^2 \sin^2(\pi/\kappa)}{2 \nu^2} P_1} \\
        &\quad + \begin{cases}
            4 N e^{-\frac{P_1}{2} \left( \sqrt{2 \gamma + 1} - 1\right)^2}\\
            \quad + 4 N^2 e^{-\frac{P_1}{8}\left( 1 - \sqrt{16 (1+\gamma) \nu^2/\rho^2 } \right)^2}, &\text{NLT} \\
             4 n e^{-\frac{1}{2 q} \epsilon^2 P_1}, &\text{SLT}
        \end{cases}
    \end{align*}
where $P_1$ is the number of the random offsets in the near-linear time  and sublinear time robust bin detection algorithms.
\end{proposition}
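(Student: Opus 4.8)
\emph{Proof plan.} The strategy is to condition throughout on the bin being a genuine singleton $\mathcal{H}_S(\VEC{k},F[\VEC{k}])$, so that $\VEC{U} = F[\VEC{k}]\VEC{s}_{\VEC{k}} + \VEC{W}$ with $\|\VEC{s}_{\VEC{k}}\|^2 = P_1$, $|F[\VEC{k}]| = \rho$, and $\VEC{W}\sim\mathcal{CN}(\VEC{0},\nu^2\VEC{I})$, and then to split each way of misclassifying this singleton into a handful of tail events for chi-squared or Gaussian statistics, controlled by the concentration results already invoked above (Lemmas~\ref{non-central-tail-bound} and \ref{mutual_coherence}) together with Proposition~\ref{prop:quant}.

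\emph{Zero-ton misdetection.} First I would note that $\|\VEC{U}\|^2/\nu^2$ is non-central chi-squared with $2P_1$ real degrees of freedom and non-centrality $P_1\rho^2/\nu^2$, hence mean $P_1(1+\rho^2/\nu^2)$; declaring $\mathcal{H}_Z$ requires $\tfrac1{P_1}\|\VEC{U}\|^2 \le (1+\gamma)\nu^2$, a lower deviation of relative size $\rho^2/\nu^2 - \gamma$, which is strictly positive because $\gamma < \tfrac{\eta}{2}\mathrm{SNR} = \rho^2/2\nu^2$. Feeding this into the non-central chi-squared lower-tail bound (Lemma~\ref{non-central-tail-bound}) produces the claimed $\exp\!\big(-\tfrac{P_1}{2}\tfrac{(\rho^2/\nu^2-\gamma)^2}{1+2\rho^2/\nu^2}\big)$, with $1+2\rho^2/\nu^2$ playing the role of the per-sample variance proxy; this step is routine.

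\emph{Multi-ton misdetection.} Writing $(\widehat{\VEC{k}},\widehat F[\widehat{\VEC{k}}])$ for the estimated pair, I would bound $\mathrm{Pr}(\mathcal{H}_M \leftarrow \mathcal{H}_S)$ by a union over three events: (i) $\widehat{\VEC{k}}=\VEC{k}$ and $\widehat F[\widehat{\VEC{k}}]=F[\VEC{k}]$ yet the residual test still fails; (ii) $\widehat{\VEC{k}}=\VEC{k}$ with $\widehat F[\widehat{\VEC{k}}]\neq F[\VEC{k}]$; (iii) $\widehat{\VEC{k}}\neq\VEC{k}$. On (i) the residual equals $\VEC{W}$, so the statistic is central chi-squared and a Laurent--Massart upper tail gives $e^{-\frac{P_1}{2}(\sqrt{1+2\gamma}-1)^2}$. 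On (ii), conditioning on $\widehat{\VEC{k}}=\VEC{k}$ makes the matched-filter estimate $\VEC{s}_{\VEC{k}}^{\mathrm H}\VEC{U}/P_1$ equal to $F[\VEC{k}]$ plus a $\mathcal{CN}(0,\nu^2/P_1)$ term, and a wrong constellation point of $\mathcal{X}$ is picked only if that perturbation crosses a decision boundary at distance $\rho\sin(\pi/\kappa)$, so a Gaussian tail with a two-way union bound yields $2e^{-\frac{\rho^2\sin^2(\pi/\kappa)}{2\nu^2}P_1}$. Event (iii) is where the two regimes diverge. For SLT, Proposition~\ref{prop:quant} makes each $\widehat k[r]$ a majority vote over $P_1$ conditionally independent copies of $Z_{p,r}$ whose mode is $0$ with a margin $p_0-\max_{i\neq0}p_i$ bounded below by a constant depending on $q$ and the SNR; a union over the $n$ symbols and the $q-1$ competitors, each a Chernoff bound on a biased count comparison, yields $4n\,e^{-\frac1{2q}\eps^2P_1}$. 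For NLT, $\widehat{\VEC{k}}$ maximizes $|\VEC{s}_{\VEC{k}'}^{\mathrm H}\VEC{U}|$ over the $\le N$ candidates with $\VEC{M}^{\mathrm T}\VEC{k}'=\VEC{j}$, and since $|\VEC{s}_{\VEC{k}}^{\mathrm H}\VEC{U}|\ge \rho P_1 - |\VEC{s}_{\VEC{k}}^{\mathrm H}\VEC{W}|$ while $|\VEC{s}_{\VEC{k}'}^{\mathrm H}\VEC{U}|\le \rho P_1\mu + |\VEC{s}_{\VEC{k}'}^{\mathrm H}\VEC{W}|$, one first invokes Lemma~\ref{mutual_coherence} to keep the coherence $\mu$ uniformly small (contributing the $4N^2 e^{-\Omega(P_1)}$ term) and then union-bounds the residual-noise tails over all candidates (contributing the $4N e^{-\frac{P_1}{2}(\sqrt{2\gamma+1}-1)^2}$ term). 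Summing the three contributions and matching constants gives the stated inequality.

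The \emph{main obstacle} is the NLT index-misdetection term: the $N/B$ candidate statistics $|\VEC{s}_{\VEC{k}'}^{\mathrm H}\VEC{U}|$ are all functions of the single observation $\VEC{U}$ and of one common random signature matrix, so the summands of a naive union bound are strongly correlated; the fix is to condition first on the high-probability event that the mutual coherence is $O(\sqrt{(\log N)/P_1})$ (Lemma~\ref{mutual_coherence}), which forces the $N^2$ prefactor and hence the requirement $P_1 = \Omega(\log N) = \Omega(n)$ for the bound to be useful --- exactly the parameterization used in Theorem~\ref{thm:noisy}. The SLT case is comparatively easy because its per-symbol majority votes are conditionally independent and only $O(n)$ of them need be controlled; and throughout, the hypothesis $0<\gamma<\tfrac{\eta}{2}\mathrm{SNR}$ is precisely what keeps every squared quantity appearing in the exponents strictly positive.
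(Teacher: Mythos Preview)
Your proposal is correct and follows essentially the same route as the paper: the same non-central chi-squared lower tail for the zero-ton misdetection, and for the multi-ton misdetection the same decomposition into a residual-test failure on the correct pair, a PSK value-estimation error conditioned on $\widehat{\VEC{k}}=\VEC{k}$, and an index-misdetection term handled by Lemmas~\ref{linear_identification_error} and~\ref{sublinear_identification_error} respectively. Your three-way union bound is in fact slightly tighter than the paper's (which picks up a factor of $2$ on $\mathrm{Pr}(\widehat{\VEC{k}}\neq\VEC{k})$ by first union-bounding $\mathrm{Pr}(\widehat F\neq F \vee \widehat{\VEC{k}}\neq\VEC{k})$ and then conditioning), but this only affects constants and not the argument.
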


\begin{proof}
    The probability of detecting a singleton as a zero-ton can be upper bounded by the probability of a singleton passing the zero-ton verification. Hence, by noting that $\VEC{W} \sim \mathcal{CN}(0, \nu^2 \VEC{I})$ and applying Lemma \ref{non-central-tail-bound},
\begin{align*}
     \mathrm{Pr}&( \mathcal{H}_Z \leftarrow \mathcal{H}_S(\VEC{k}, F[\VEC{k}])) \\
     &\leq  \mathrm{Pr}\left( \frac{1}{P_1} \|F[\VEC{k}] \VEC{s}_{\VEC{k}} + \VEC{W}\|^2 \leq (1+\gamma)\nu^2 \right)\\
     &\leq e^{-\frac{P_1}{2}\frac{(\rho^2/\nu^2 - \gamma)^2}{1 + 2\rho^2/\nu^2}}.
 \end{align*}
which holds as long as $\gamma < \rho^2 / \nu^2 = \eta \mathrm{SNR}$.

On the other hand, the probability of detecting a singleton as a multi-ton can be written as the probability of failing the singleton verification step for some index-value pair $(\widehat{\VEC{k}}, \widehat{F}[\widehat{\VEC{k}}] )$. Denoting the error event by 
\begin{equation*}
\mathcal{E}_{V} = \left\{ \frac{1}{P} \left\| \VEC{U} - \widehat{F}[\widehat{\VEC{k}}] \VEC{s}_{\widehat{k}} \right\|^2 \geq (1 + \gamma) \nu^2 \right\}    ,
\end{equation*} 
we can write, 
\begin{align*}
    \mathrm{Pr}(\mathcal{H}_M \leftarrow \mathcal{H}_S(\VEC{k}, F[\VEC{k}]) ) &= \mathrm{Pr} (\mathcal{E}_{V}) \\
    &\leq \mathrm{Pr} (\mathcal{E}_{V} | \widehat{F}[\widehat{\VEC{k}}] = F[\VEC{k}] \wedge \widehat{\VEC{k}} = \VEC{k})\\
    &+ \mathrm{Pr} (\widehat{F}[\widehat{\VEC{k}}] \neq F[\VEC{k}]  \vee \widehat{\VEC{k}} \neq \VEC{k}).
\end{align*}

Here, the first term is bounded as
\begin{align*}
    \mathrm{Pr} (\mathcal{E}_{V} | \widehat{F}[\widehat{\VEC{k}}] = F[\VEC{k}] \wedge \widehat{\VEC{k}} = \VEC{k}) &\leq  \mathrm{Pr}\Big( \frac{1}{P_1} \|\VEC{W}\|^2 \geq (1+\gamma)\nu^2 \Big)\\
     &\leq e^{-\frac{P_1}{2}(\sqrt{1+2 \gamma} - 1)^2}.
\end{align*}

The second term is bounded as
\begin{align*}
    \mathrm{Pr} (\widehat{F}[\widehat{\VEC{k}}] &\neq F[\VEC{k}]  \vee \widehat{\VEC{k}} \neq \VEC{k}) \\
    &\leq \mathrm{Pr} (\widehat{F}[\widehat{\VEC{k}}] \neq F[\VEC{k}]) + \mathrm{Pr} (\widehat{\VEC{k}} \neq \VEC{k})\\
    &= \mathrm{Pr} (\widehat{F}[\widehat{\VEC{k}}] \neq F[\VEC{k}] | \widehat{\VEC{k}} \neq \VEC{k}) \mathrm{Pr} (\widehat{\VEC{k}} \neq \VEC{k})\\
    &\qquad + \mathrm{Pr} (\widehat{F}[\widehat{\VEC{k}}] \neq F[\VEC{k}] | \widehat{\VEC{k}} = \VEC{k}) \mathrm{Pr} (\widehat{\VEC{k}} = \VEC{k})\\
    &\qquad + \mathrm{Pr} (\widehat{\VEC{k}} \neq \VEC{k})\\
    &\leq \mathrm{Pr} (\widehat{F}[\widehat{\VEC{k}}] \neq F[\VEC{k}] | \widehat{\VEC{k}} = \VEC{k}) + 2 \mathrm{Pr} (\widehat{\VEC{k}} \neq \VEC{k})
\end{align*}

The first term is the error probability of a $\kappa$-point PSK signal with constellation points $\{ \rho, \rho \phi, \rho \phi^2, \dots, \rho \phi^{\kappa-1} \}$, and can be bounded as
\begin{equation*}
    \mathrm{Pr} (\widehat{F}[\widehat{\VEC{k}}] \neq F[\VEC{k}] | \widehat{\VEC{k}} = \VEC{k}) \leq 2 e^{- \frac{\rho^2 \sin^2(\pi/\kappa)}{2 \nu^2} P_1}
\end{equation*}

Since the second term $\mathrm{Pr} (\widehat{\VEC{k}} \neq \VEC{k})$ is the error probability of the singleton search, we use Lemmas \ref{linear_identification_error} and \ref{sublinear_identification_error}.
\end{proof}

\begin{proposition}[Crossed Verification Rate]
For $0 < \gamma < \frac{\eta}{2} \mathrm{SNR}$, the crossed verification rate for each bin hypothesis satisfies
\begin{align*}
    &\mathrm{Pr}(\mathcal{H}_S(\widehat{\VEC{k}}, \widehat{F}[\widehat{\VEC{k}}]) \leftarrow \mathcal{H}_S(\VEC{k}, F[\VEC{k}])) \\
    &\qquad \leq 4 N^2 e^{-\frac{1}{32} \left( 1 - \frac{2 \gamma \nu^2}{2 \rho^2} \right)^2 P_1} + e^{- \frac{P_1 \gamma^2}{2 (1 + 4\gamma)}}
\end{align*}
where $P_1$ is the number of the random offsets in the near-linear time  and sublinear time robust bin detection algorithms.
\end{proposition}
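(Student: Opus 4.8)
The plan is to recognize the crossed-verification event as the $L=2$ special case of the multi-ton false-verification argument already carried out for $\mathrm{Pr}(\mathcal{H}_S(\widehat{\VEC{k}},\widehat{F}[\widehat{\VEC{k}}]) \leftarrow \mathcal{H}_M)$, and to re-run that argument with interference-support size $L=2$.

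First I would set up the residual that the singleton verification test inspects. Under the crossed event the true bin observation is the singleton $\VEC{U} = F[\VEC{k}]\,\VEC{s}_{\VEC{k}} + \VEC{W}$ with $\VEC{W} \sim \mathcal{CN}(\VEC{0}, \nu^2 \VEC{I})$, while the verification step evaluates
\[
\VEC{U} - \widehat{F}[\widehat{\VEC{k}}]\,\VEC{s}_{\widehat{\VEC{k}}} \;=\; \VEC{g} + \VEC{W}, \qquad \VEC{g} := \VEC{S}\,\VEC{\beta}, \quad \VEC{\beta} := F[\VEC{k}]\,\VEC{e}_{\VEC{k}} - \widehat{F}[\widehat{\VEC{k}}]\,\VEC{e}_{\widehat{\VEC{k}}}.
\]
Because the declared pair differs from the true one, $\VEC{\beta} \neq \VEC{0}$; in the main case $\widehat{\VEC{k}} \neq \VEC{k}$ its support $\mathcal{L}$ has exactly $L = 2$ elements, and since $F[\VEC{k}], \widehat{F}[\widehat{\VEC{k}}] \in \mathcal{X}$ by Assumption \ref{ass:sparse2}, $\norm{\VEC{\beta}_{\mathcal{L}}}^2 = |F[\VEC{k}]|^2 + |\widehat{F}[\widehat{\VEC{k}}]|^2 = 2\rho^2 = L\rho^2$. (In the degenerate sub-case $\widehat{\VEC{k}} = \VEC{k}$ with a wrong value, $\norm{\VEC{g}}^2 = P_1\,|F[\VEC{k}] - \widehat{F}[\widehat{\VEC{k}}]|^2 \geq 4 P_1 \rho^2 \sin^2(\pi/\kappa)$ deterministically, so for $\gamma$ in the stated range the residual test essentially never passes; this sub-case is in any case already subsumed by the PSK value-error term $2 e^{-\rho^2 \sin^2(\pi/\kappa) P_1 / 2\nu^2}$ appearing in the missed-verification proposition, so it does not contribute a new term here.)

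Next I would bound the probability of the residual test passing by conditioning on the energy of the interference $\VEC{g}$:
\begin{align*}
\mathrm{Pr}\!\big(\mathcal{H}_S(\widehat{\VEC{k}},\widehat{F}[\widehat{\VEC{k}}]) \leftarrow \mathcal{H}_S(\VEC{k},F[\VEC{k}])\big) \leq\;& \mathrm{Pr}\!\Big( \tfrac{1}{P_1}\norm{\VEC{g}+\VEC{W}}^2 \leq (1+\gamma)\nu^2 \;\Big|\; \tfrac{1}{P_1}\norm{\VEC{g}}^2 \geq 2\gamma\nu^2 \Big) \\
&+ \mathrm{Pr}\!\Big( \tfrac{1}{P_1}\norm{\VEC{g}}^2 \leq 2\gamma\nu^2 \Big).
\end{align*}
For the first term, conditioning on $\norm{\VEC{g}}^2/P_1 \geq 2\gamma\nu^2$ makes $\norm{\VEC{g}+\VEC{W}}^2/\nu^2$ a non-central chi-squared variable with non-centrality at least $2\gamma P_1$, so the lower-tail estimate of Lemma \ref{non-central-tail-bound} yields the $e^{-P_1\gamma^2/(2(1+4\gamma))}$ term, exactly as in the multi-ton calculation. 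For the second term I would write $\VEC{g} = \VEC{S}_{\mathcal{L}}\VEC{\beta}_{\mathcal{L}}$ and use $\norm{\VEC{g}}^2 \geq \lambda_{\min}(\VEC{S}_{\mathcal{L}}^\mathrm{H}\VEC{S}_{\mathcal{L}})\,\norm{\VEC{\beta}_{\mathcal{L}}}^2 \geq 2\rho^2\,\lambda_{\min}(\VEC{S}_{\mathcal{L}}^\mathrm{H}\VEC{S}_{\mathcal{L}})$, so the event forces $\lambda_{\min}(\tfrac{1}{P_1}\VEC{S}_{\mathcal{L}}^\mathrm{H}\VEC{S}_{\mathcal{L}}) \leq \tfrac{2\gamma\nu^2}{2\rho^2}$. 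The Gershgorin circle theorem gives $\lambda_{\min}(\tfrac{1}{P_1}\VEC{S}_{\mathcal{L}}^\mathrm{H}\VEC{S}_{\mathcal{L}}) \geq 1 - L\mu = 1 - 2\mu$ with $\mu$ the mutual coherence of $\VEC{S}$, uniformly over all index subsets of size two, so the probability is at most $\mathrm{Pr}(\mu \geq \mu_0)$ with $\mu_0 := \tfrac{1}{2}(1 - \tfrac{2\gamma\nu^2}{2\rho^2})$. Invoking the mutual-coherence concentration bound (Lemma \ref{mutual_coherence}), whose $4N^2$ factor already accounts for the union over all column pairs of $\VEC{S}$, gives $4 N^2 e^{-\mu_0^2 P_1/8} = 4 N^2 e^{-\frac{1}{32}(1 - 2\gamma\nu^2/2\rho^2)^2 P_1}$, valid because the range $0 < \gamma < \tfrac{\eta}{2}\mathrm{SNR}$ keeps $\mu_0 > 0$. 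Summing the two contributions yields the claimed bound.

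I do not expect a genuinely new obstacle: the whole argument is a near-verbatim specialization of the $|\mathcal{L}| = O(1)$ branch of the multi-ton false-verification proof with $L = 2$, and in particular the $4N^2$ prefactor is inherited for free from Lemma \ref{mutual_coherence} rather than requiring an extra union bound over candidate indices $\widehat{\VEC{k}}$. The two places that need care are (i) treating the degenerate $\widehat{\VEC{k}} = \VEC{k}$, wrong-value sub-case so that it is neither dropped nor double-counted against the PSK term of the missed-verification bound, and (ii) checking that the constants line up, namely that $\tfrac{1}{8L^2}$ evaluates to $\tfrac{1}{32}$ at $L = 2$ and that $\norm{\VEC{\beta}_{\mathcal{L}}}^2 = L\rho^2$ — the single point where Assumption \ref{ass:sparse2} (all nonzero coefficients have modulus exactly $\rho$) is used.
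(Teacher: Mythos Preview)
Your proposal is correct and follows essentially the same route as the paper: write the residual as $\VEC{S}\VEC{\beta}+\VEC{W}$ with a $2$-sparse $\VEC{\beta}$, split on $\|\VEC{g}\|^2/P_1 \gtrless 2\gamma\nu^2$, apply Lemma~\ref{non-central-tail-bound} for the first piece and Gershgorin plus Lemma~\ref{mutual_coherence} (with $L=2$, hence $1/8L^2=1/32$) for the second. The only cosmetic difference is that the paper defines the crossed event to require $\widehat{\VEC{k}}\neq\VEC{k}$ from the outset, so your careful handling of the degenerate $\widehat{\VEC{k}}=\VEC{k}$ sub-case is additional detail rather than a divergence in method.
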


\begin{proof}
This error event can only occur if a singleton with index-value pair $(\VEC{k}, F[\VEC{k}])$ passes the singleton verification step for some index-value pair $(\widehat{\VEC{k}}, \widehat{F}[\widehat{\VEC{k}}] )$ such that $\VEC{k} \neq \widehat{\VEC{k}}$. Hence,
\begin{align*}
    &\mathrm{Pr}(\mathcal{H}_S(\widehat{\VEC{k}}, \widehat{F}[\widehat{\VEC{k}}]) \leftarrow \mathcal{H}_S(\VEC{k}, F[\VEC{k}])) \\
    &\leq  \mathrm{Pr}\left( \frac{1}{P_1} \|F[\VEC{k}] \VEC{s}_{\VEC{k}} - \widehat{F}[\widehat{\VEC{k}}] \VEC{s}_{\widehat{\VEC{k}}} + \VEC{W}\|^2 \leq (1+\gamma)\nu^2 \right)\\
    &= \mathrm{Pr}\left( \frac{1}{P_1} \|\VEC{S} \VEC{\beta} + \VEC{W}\|^2 \leq (1+\gamma)\nu^2 \right)\\
    &= \mathrm{Pr}\left( \frac{1}{P_1} \|\VEC{S} \VEC{\beta} + \VEC{W}\|^2 \leq (1+\gamma)\nu^2 \middle | \|\VEC{S} \VEC{\beta}\|^2 \geq 2 \gamma \nu^2 \right)\\
    &\qquad + \mathrm{Pr}\left( \|\VEC{S} \VEC{\beta}\|^2 \leq 2 \gamma \nu^2 \right)
\end{align*}
where $\VEC{\beta}$ is a $2$-sparse vector with non-zero entries from $\{ \rho, \rho \phi, \rho \phi^2, \dots, \rho \phi^{\kappa-1} \}$. Using Lemma \ref{non-central-tail-bound}, the first term is upper-bounded by $e^{- \frac{P_1 \gamma^2}{2 (1 + 4\gamma)}}$. By Lemma \ref{mutual_coherence}, the second term is upper bounded by $4 N^2 e^{-\frac{1}{32} \left( 1 - \frac{2 \gamma \nu^2}{2 \rho^2} \right)^2 P_1}$. 
\end{proof}

\begin{lemma}[Non-central Tail Bounds]
    Given $\VEC{g} \in \mathbb{C}^P$ and a complex Gaussian vector $\VEC{v} \sim \mathcal{CN}(0, \nu^2 \VEC{I})$, the following tail bounds hold:
    \begin{align*}
        \mathrm{Pr}\left( \frac{1}{P}\|\VEC{g} + \VEC{v}\|^2 \geq \tau_1 \right) &\leq e^{-\frac{P}{2} (\sqrt{2 \tau_1/\nu^2-1} - \sqrt{1 + 2\theta_0})^2}\\
        \mathrm{Pr}\left( \frac{1}{P}\|\VEC{g} + \VEC{v}\|^2 \leq \tau_2 \right) &\leq e^{-\frac{P}{2} \frac{\left(1+\theta_0-\tau_2/\nu^2 \right)^2}{1+2\theta_0}}\\
    \end{align*}
for any $\tau_1$ and $\tau_2$ that satisfy
\begin{equation*}
    \tau_1 \geq \nu^2 (1 + \theta_0) \geq \tau_2,
\end{equation*}
where $\theta_0$ is the normalized non-centrality parameter given by 
\begin{equation*}
    \theta_0 := \frac{\|\VEC{g}\|^2}{P \nu^2}.
\end{equation*}
\label{non-central-tail-bound}
\end{lemma}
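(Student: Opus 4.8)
The plan is to recognize $\|\VEC{g}+\VEC{v}\|^2$ as a rescaled noncentral chi-squared random variable and then apply the classical concentration inequalities for such variables, inverting them to read off the stated thresholds.

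First I would unpack the distribution. Since $\VEC{v}\sim\mathcal{CN}(0,\nu^2\VEC{I})$ is circularly symmetric, its real and imaginary parts together form $2P$ independent real Gaussians of variance $\nu^2/2$; writing $\VEC{g}$ in the same real coordinates shows that $Y:=\tfrac{2}{\nu^2}\|\VEC{g}+\VEC{v}\|^2$ is noncentral chi-squared with $d=2P$ degrees of freedom and noncentrality parameter $\lambda=\tfrac{2\|\VEC{g}\|^2}{\nu^2}=2P\theta_0$. Its mean is $d+\lambda=2P(1+\theta_0)$, i.e. $\E\big[\tfrac1P\|\VEC{g}+\VEC{v}\|^2\big]=\nu^2(1+\theta_0)$, so the hypothesis $\tau_1\ge\nu^2(1+\theta_0)\ge\tau_2$ is exactly the requirement that $\tau_1$ and $\tau_2$ lie on the upper, respectively lower, side of this mean; the two claimed inequalities are thus an upper-tail and a lower-tail bound for $Y$.

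Next I would invoke the classical (Laurent--Massart / Birg\'e) noncentral chi-squared tail bounds: for every $t\ge 0$,
\[
  \Pr\!\big(Y\ge d+\lambda+2\sqrt{(d+2\lambda)t}+2t\big)\le e^{-t},\qquad
  \Pr\!\big(Y\le d+\lambda-2\sqrt{(d+2\lambda)t}\big)\le e^{-t}.
\]
If a self-contained derivation is preferred, both follow from the Chernoff method applied to the explicit MGF $M_Y(s)=(1-2s)^{-P}\exp\!\big(\tfrac{\lambda s}{1-2s}\big)$ (finite for $s<1/2$, and for all $s<0$ in the lower tail), together with the elementary bound $-\log(1-x)\le x+\tfrac{x^2}{2(1-x)}$ and its lower-tail analogue, followed by an optimization of the tilt $s$ in closed form.

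Finally I would invert these bounds. Setting the upper-tail threshold equal to $\tfrac{2P\tau_1}{\nu^2}$, solving the resulting quadratic in $\sqrt t$, and substituting $d=2P$, $\lambda=2P\theta_0$ gives
\[
  t=\frac P2\Big(\sqrt{2\tau_1/\nu^2-1}-\sqrt{1+2\theta_0}\Big)^2;
\]
setting the lower-tail threshold equal to $\tfrac{2P\tau_2}{\nu^2}$ gives $t=\tfrac{(d+\lambda-2P\tau_2/\nu^2)^2}{4(d+2\lambda)}=\tfrac P2\cdot\tfrac{(1+\theta_0-\tau_2/\nu^2)^2}{1+2\theta_0}$. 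Plugging these back into $e^{-t}$ yields precisely the two stated bounds, and the hypothesis $\tau_1\ge\nu^2(1+\theta_0)\ge\tau_2$ is exactly what keeps the quantities under the square roots (and the admissible root of the quadratic) nonnegative. I expect the main obstacle to be purely mechanical --- carrying the quadratic inversion through with $d=2P$, $\lambda=2P\theta_0$ without sign slips, and (if one reproves the noncentral-$\chi^2$ bounds rather than citing them) selecting the correct one-sided logarithmic estimate for each tail. I would also note that the bound is uniform over $\VEC{g}$, which is what permits it to be applied conditionally on the random interference vector $\VEC{g}$ appearing in the false-, missed-, and crossed-verification estimates.
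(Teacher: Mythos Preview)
Your proposal is correct and follows essentially the same approach as the paper: both reduce the complex Gaussian vector to a real $2P$-dimensional Gaussian with variance $\nu^2/2$ per coordinate, thereby identifying $\tfrac{2}{\nu^2}\|\VEC{g}+\VEC{v}\|^2$ with a noncentral $\chi^2$ variable, and then apply known concentration inequalities for such variables. The paper simply cites Lemma~11 of \cite{Li2015} for that last step, whereas you name the Laurent--Massart bounds and carry out the inversion explicitly; your algebra for both tails checks out and recovers the stated exponents exactly.
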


\begin{proof}
    We apply Lemma 11 in \cite{Li2015} to $\widetilde{\VEC{g}} = [\mathrm{Re}[\VEC{g}], \mathrm{Im}[\VEC{g}]]^\mathrm{T} \in \mathbb{R}^{2P}$ and $\widetilde{\VEC{v}} = [\mathrm{Re}[\VEC{v}], \mathrm{Im}[\VEC{v}]]^\mathrm{T} \sim \mathcal{N}(0, \frac{\nu^2}{2} \VEC{I})$.
\end{proof}

\begin{lemma}
    Denote the mutual coherence of $\VEC{S}$ by 
    \begin{equation*}
     \mu := \max_{\VEC{k} \neq \VEC{m}} \frac{1}{P_1} |\VEC{s}_{\VEC{k}}^{\mathrm{H}} \VEC{s}_{\VEC{m}}| . 
    \end{equation*}
    Then, we have $\mathrm{Pr}(\mu \geq \mu_0) \leq 4 N^2 e^{- \frac{\mu_0^2}{8} P_1}$ for any $\mu_0 > 0$.
    \label{mutual_coherence}
\end{lemma}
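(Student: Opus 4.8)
The plan is a concentration-plus-union-bound argument. Fix a pair of distinct indices $\VEC{k} \neq \VEC{m}$ and put $\VEC{a} := \VEC{m} - \VEC{k} \in \Z_q^n \setminus \{\VEC{0}\}$. Since the $p$-th coordinate of $\VEC{s}_{\VEC{k}}$ equals $\omega^{\inner{\VEC{d}_p}{\VEC{k}}}$, we have
\begin{equation*}
    \VEC{s}_{\VEC{k}}^{\mathrm{H}} \VEC{s}_{\VEC{m}} = \sum_{p=1}^{P_1} \omega^{\inner{\VEC{d}_p}{\VEC{a}}} =: \sum_{p=1}^{P_1} X_p ,
\end{equation*}
where the $X_p$ are i.i.d.\ (the offsets $\VEC{d}_p$, $p \in [P_1]$, are drawn independently and uniformly over $\Z_q^n$) and $\abs{X_p} = 1$. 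First I would show that $\E[X_p] = 0$: averaging $\omega^{\inner{\VEC{d}}{\VEC{a}}}$ over $\VEC{d}$ uniform on $\Z_q^n$ is precisely the sum computed in Lemma~\ref{lemma:zero} (with $b$ replaced by $n$), and it vanishes exactly because $\VEC{a} \neq \VEC{0}$; this uses only the root-of-unity identity, so it remains valid over the ring $\Z_q$ for every $q$.

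Next I would control $\tfrac{1}{P_1}\abs{\sum_p X_p}$ through its real and imaginary parts. Writing $X_p$ via its real part $A_p$ and imaginary part $B_p$, both lying in $[-1,1]$ with $\E[A_p] = \E[B_p] = 0$, the event $\{\tfrac{1}{P_1}\abs{\sum_p X_p} \geq \mu_0\}$ is contained in $\{\tfrac{1}{P_1}\abs{\sum_p A_p} \geq \mu_0/2\} \cup \{\tfrac{1}{P_1}\abs{\sum_p B_p} \geq \mu_0/2\}$. Applying the two-sided Hoeffding inequality to each of these bounded, zero-mean, i.i.d.\ sums gives
\begin{equation*}
    \mathrm{Pr}\!\left( \frac{1}{P_1}\left|\sum_{p=1}^{P_1} A_p\right| \geq \frac{\mu_0}{2} \right) \leq 2\, e^{-\mu_0^2 P_1 / 8},
\end{equation*}
and the same bound for $B_p$; a union bound over the two events yields $\mathrm{Pr}\big(\tfrac{1}{P_1}\abs{\VEC{s}_{\VEC{k}}^{\mathrm{H}} \VEC{s}_{\VEC{m}}} \geq \mu_0\big) \leq 4\, e^{-\mu_0^2 P_1 / 8}$ for the fixed pair.

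Finally I would apply a union bound over all ordered pairs $\VEC{k} \neq \VEC{m}$, of which there are at most $N^2$, to conclude $\mathrm{Pr}(\mu \geq \mu_0) \leq 4 N^2 e^{-\mu_0^2 P_1 / 8}$. The argument is essentially routine; the only points requiring care are (i) justifying the zero-mean property of $\omega^{\inner{\VEC{d}_p}{\VEC{a}}}$ directly over $\Z_q$ through Lemma~\ref{lemma:zero} rather than a field argument, and (ii) tracking the constant so the exponent comes out as $\mu_0^2/8$ after passing to real and imaginary parts and invoking Hoeffding's inequality. I do not expect a genuine obstacle.
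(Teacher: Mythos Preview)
Your proposal is correct and follows essentially the same route as the paper: write the pairwise inner product as a sum of i.i.d.\ unit-modulus terms indexed by the random offsets, bound the complex sum via its real and imaginary parts, apply Hoeffding, and union bound over the $N^2$ pairs. The only noteworthy difference is that the paper asserts each summand $\omega^{\langle \VEC{d}_p,\,\VEC{m}\ominus_q\VEC{k}\rangle}$ is \emph{uniformly} distributed over the $q$-th roots of unity, which need not hold when $q$ is composite (e.g.\ $q=4$, $\VEC{m}-\VEC{k}=2\VEC{e}_1$); your invocation of Lemma~\ref{lemma:zero} to establish the zero mean directly is the cleaner justification and works for all $q$.
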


\begin{proof}
The columns of $\VEC{S}$ are given as $\VEC{s}_{\VEC{k}} = \omega^{\VEC{D} \VEC{k}}$. Let us denote the $p$-th entry of $\VEC{s}_{\VEC{k}}$ by $\VEC{s}_{\VEC{k}}[p] = \omega^{\langle\VEC{d}_{p}, \VEC{k}\rangle}$ for $p \in [P_1]$. Therefore, each multiplication can be written as $\VEC{y}[p] := \VEC{s}_{\VEC{k}}[p]^{\mathrm{H}} \VEC{s}_{\VEC{m}}[p] = \omega^{\langle\VEC{d}_{p}, \VEC{m} \ominus_q \VEC{k}\rangle}$. Since $\VEC{m} \ominus_q \VEC{k} \neq \VEC{0}$ for all $\VEC{k} \neq \VEC{m}$, each $\VEC{y}[p]$ term is uniformly distributed over the set $ \{ \rho, \rho \omega, \rho \omega^2, \dots, \rho \omega^{q-1} \}$. Then, due to independence in choosing offsets $\VEC{d}_{p}$, we can apply a Hoeffding bound to obtain
\begin{align*}
    \mathrm{Pr} \left(\frac{|\VEC{s}_{\VEC{k}}^{\mathrm{H}} \VEC{s}_{\VEC{m}}|}{P_1}  \geq \mu_0 \right) &= \mathrm{Pr}(|\VEC{y}| \geq P_1 \mu_0) \\
    &\leq 2 \mathrm{Pr}(\mathrm{Re}|\VEC{y}| \geq P_1 \mu_0 / \sqrt{2})\\
    &\leq 4 e^{- \frac{\mu_0^2}{8} P_1}.
\end{align*}

By applying a union bound over all $(\VEC{k}, \VEC{m})$ pairs, we obtain the result.
\end{proof}

\begin{lemma}
    The singleton search error probability of the near-linear time robust bin detection is upper bounded as
    \begin{equation*}
        \mathrm{Pr} (\widehat{\VEC{k}} \neq \VEC{k}) \leq 4 N e^{-\frac{P}{2} \left( \sqrt{2 \epsilon + 1} - 1\right)^2} + 4 N^2 e^{-\frac{P}{8}\left( 1 - \sqrt{\frac{16 (1+\epsilon) \nu^2} {\rho^2} } \right)^2}
    \end{equation*}
for some constant $\epsilon > 0$.
\label{linear_identification_error}
\end{lemma}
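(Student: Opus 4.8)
The plan is to recast the maximum-likelihood search as a correlation test and then exploit that the randomness in the offset codebook $\VEC{S}$ (a deterministic function of the random offsets $\VEC{D}$) is independent of the additive noise $\VEC{W}$, so the two can be controlled separately. Since every column of $\VEC{S}$ has unit-modulus entries, $\norm{\VEC{s}_{\VEC{k}}}^2 = P$ and $\norm{\VEC{U} - \widehat{\alpha}[\VEC{k}]\,\VEC{s}_{\VEC{k}}}^2 = \norm{\VEC{U}}^2 - \tfrac{1}{P}\,\abs{\VEC{s}_{\VEC{k}}^{\mathrm H}\VEC{U}}^2$, so $\widehat{\VEC{k}}$ is the index that minimises the residual $\norm{\Pi_{\VEC{k}}^\perp \VEC{U}}^2$, where $\Pi_{\VEC{k}}^\perp$ denotes orthogonal projection onto $\mathrm{span}(\VEC{s}_{\VEC{k}})^\perp$. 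Writing $\VEC{U} = F[\VEC{k}]\VEC{s}_{\VEC{k}} + \VEC{W}$ for the true singleton index $\VEC{k}$, the residual at $\VEC{k}$ is exactly $\norm{\Pi_{\VEC{k}}^\perp\VEC{W}}^2 \le \norm{\VEC{W}}^2$, hence $\{\widehat{\VEC{k}} \neq \VEC{k}\} \subseteq \bigcup_{\VEC{m}\neq\VEC{k}} \{\,\norm{\Pi_{\VEC{m}}^\perp \VEC{U}}^2 \le \norm{\VEC{W}}^2\,\}$, a union over the at most $N/B \le N$ competing candidates $\VEC{m}$.

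Next I would fix the threshold $\tau := (1+\epsilon)P\nu^2$ and the good coherence event $\mathcal{B} := \{\mu \le \mu_0\}$ with $\mu_0 := 1 - \sqrt{16(1+\epsilon)\nu^2/\rho^2}$, which is positive under the SNR hypothesis; Lemma~\ref{mutual_coherence} gives $\mathrm{Pr}(\mathcal{B}^{\mathrm c}) \le 4N^2 e^{-\mu_0^2 P/8}$, the second term of the claim. Splitting according to $\tau$ and $\mathcal{B}$, one has $\{\widehat{\VEC{k}}\neq\VEC{k}\} \subseteq \{\norm{\VEC{W}}^2 > \tau\} \cup \mathcal{B}^{\mathrm c} \cup \bigcup_{\VEC{m}\neq\VEC{k}} \{\,\norm{\Pi_{\VEC{m}}^\perp\VEC{U}}^2 \le \tau,\ \mathcal{B}\,\}$, so only the first and third pieces remain to be estimated. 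The first is at most $e^{-\frac{P}{2}(\sqrt{2\epsilon+1}-1)^2}$ by Lemma~\ref{non-central-tail-bound} applied with $\VEC{g}=\VEC{0}$ (so $\theta_0 = 0$) and $\tau_1 = (1+\epsilon)\nu^2$.

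For the third piece I would use the choice of $\mu_0$. On $\mathcal{B}$ we have $\norm{\Pi_{\VEC{m}}^\perp\VEC{s}_{\VEC{k}}}^2 = P - \tfrac{1}{P}\abs{\VEC{s}_{\VEC{m}}^{\mathrm H}\VEC{s}_{\VEC{k}}}^2 \ge P(1-\mu_0^2)$, so in $\Pi_{\VEC{m}}^\perp\VEC{U} = F[\VEC{k}]\,\Pi_{\VEC{m}}^\perp\VEC{s}_{\VEC{k}} + \Pi_{\VEC{m}}^\perp\VEC{W}$ the deterministic part has norm at least $\rho\sqrt{P(1-\mu_0^2)}$; by the reverse triangle inequality, $\norm{\Pi_{\VEC{m}}^\perp\VEC{U}}^2 \le \tau$ forces $\norm{\VEC{W}}^2 \ge \big(\rho\sqrt{P(1-\mu_0^2)} - \sqrt{\tau}\big)^2$. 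The definition of $\mu_0$ is calibrated exactly so that (whenever $\mu_0>0$) $\rho\sqrt{1-\mu_0^2} \ge 2\nu\sqrt{1+\epsilon}$, which makes this right-hand side $\ge \tau$; then monotonicity and Lemma~\ref{non-central-tail-bound} again bound each such event by $e^{-\frac{P}{2}(\sqrt{2\epsilon+1}-1)^2}$. Summing over the $\le N$ candidates $\VEC{m}$, collecting the three contributions, and rounding the constant up to $4N$ produces the claimed bound.

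The step I expect to be the main obstacle is the calibration in the third piece: one must check that the reverse triangle inequality is actually informative, i.e.\ that $\rho\sqrt{P(1-\mu_0^2)} > \norm{\VEC{W}}$ on the relevant event, which is precisely where the SNR hypothesis enters, and one must verify that the constant chain $\mu_0^2 \le 1 - 4(1+\epsilon)\nu^2/\rho^2$ holds automatically on $\{\mu_0>0\}$. The remaining losses are harmless: $\Pi_{\VEC{m}}^\perp\VEC{W}$ is a degenerate, rank-$(P-1)$ complex Gaussian, which I bound crudely by $\VEC{W}$, and converting the complex tail events into the real-valued form of Lemma~\ref{non-central-tail-bound} costs only universal constants. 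The independence of $\VEC{D}$ from $\VEC{W}$ is what makes conditioning on $\mathcal{B}$ legitimate and keeps the whole argument modular.
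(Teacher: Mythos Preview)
Your proposal is correct and follows essentially the same route as the paper: condition on the low-coherence event $\mathcal{H}=\{\mu<\mu_0\}$ with $\mu_0=1-\sqrt{16(1+\epsilon)\nu^2/\rho^2}$, control $\mathrm{Pr}(\mathcal{H}^{\mathrm c})$ via Lemma~\ref{mutual_coherence}, use the reverse triangle inequality on the projected residuals to force a large $\|\VEC{W}\|^2$, and finish with Lemma~\ref{non-central-tail-bound} plus a union bound over the $\le N$ competitors. Your explicit introduction of the threshold $\tau=(1+\epsilon)P\nu^2$ and the projection notation $\Pi_{\VEC m}^\perp$ makes the calibration step (why this particular $\mu_0$ yields $\rho\sqrt{1-\mu_0^2}\ge 2\nu\sqrt{1+\epsilon}$) more transparent than in the paper, where the jump to ``$\le 4\,\mathrm{Pr}_{\mathcal H}(\tfrac{1}{P}\|\VEC{W}\|^2\ge(1+\epsilon)\nu^2)$'' is stated without the intermediate arithmetic, but the underlying argument is identical.
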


\begin{proof}
The error occurs if the residual energy for some $\VEC{m} \neq \VEC{k}$ is lower than the residual energy for true singleton index $\VEC{k}$. Let $\mathcal{H}$ denote the event $\{ \mu < 1 - \sqrt{16 (1 + \epsilon) \nu^2 / \rho^2 } \}$ where $\mu$ is the mutual coherence of $\VEC{S}$ and $\gamma$ is a constant. Furthermore, let $\mathrm{Pr}_{\mathcal{H}}$ denote probabilities conditioned on $\mathcal{H}$. Then, for any given $\VEC{m} \neq \VEC{k}$,
\begin{align*}
    &\mathrm{Pr}_{\mathcal{H}} (\widehat{\VEC{k}} = \VEC{m}) \\
    &\qquad \leq \mathrm{Pr}_{\mathcal{H}} \Bigg( \left\|\frac{1}{P} \VEC{s}_{\VEC{m}}^\mathrm{T} \VEC{U} \VEC{s}_{\VEC{m}} - \VEC{U} \right\| \leq \left\|\frac{1}{P} \VEC{s}_{\VEC{k}}^\mathrm{T} \VEC{U} \VEC{s}_{\VEC{k}} - \VEC{U} \right\| \Bigg) \\
    &\quad = \mathrm{Pr}_{\mathcal{H}} \Bigg( \left\|\frac{1}{P} \VEC{s}_{\VEC{m}}^\mathrm{T} \left( F[\VEC{k}] \VEC{s}_{\VEC{k}} + \VEC{W} \right) \VEC{s}_{\VEC{m}} - F[\VEC{k}] \VEC{s}_{\VEC{k}} - \VEC{W} \right\| \\
    &\quad \qquad \qquad \leq \left\|\frac{1}{P} \VEC{s}_{\VEC{k}}^\mathrm{T} \left(F[\VEC{k}] \VEC{s}_{\VEC{k}} + \VEC{W} \right) \VEC{s}_{\VEC{k}} - F[\VEC{k}] \VEC{s}_{\VEC{k}} - \VEC{W} \right\| \Bigg) \\
    &\quad = \mathrm{Pr}_{\mathcal{H}} \Bigg( \left\| F[\VEC{k}] \left( \frac{1}{P} \VEC{s}_{\VEC{m}}^\mathrm{T} \VEC{s}_{\VEC{k}} \VEC{s}_{\VEC{m}} - \VEC{s}_{\VEC{k}} \right) + \frac{1}{P} \VEC{s}_{\VEC{m}}^\mathrm{T} \VEC{W} \VEC{s}_{\VEC{m}} - \VEC{W} \right\| \\
    &\quad \qquad \qquad \leq \left\|\frac{1}{P} \VEC{s}_{\VEC{k}}^\mathrm{T} \VEC{W} \VEC{s}_{\VEC{k}} - \VEC{W} \right\| \Bigg).
\end{align*}
Then, 
\begin{align*}
    &\mathrm{Pr}_{\mathcal{H}} (\widehat{\VEC{k}} = \VEC{m}) \\
    &\qquad \leq \mathrm{Pr}_{\mathcal{H}} \Bigg( \rho \left\| \frac{1}{P} \VEC{s}_{\VEC{m}}^\mathrm{T} \VEC{s}_{\VEC{k}} \VEC{s}_{\VEC{m}} - \VEC{s}_{\VEC{k}} \right\| - \left\| \frac{1}{P} \VEC{s}_{\VEC{m}}^\mathrm{T} \VEC{W} \VEC{s}_{\VEC{m}} - \VEC{W} \right\| \\
    &\qquad \qquad \qquad \leq \left\|\frac{1}{P} \VEC{s}_{\VEC{k}}^\mathrm{T} \VEC{W} \VEC{s}_{\VEC{k}} - \VEC{W} \right\| \Bigg) \\
&\qquad \leq 4 \mathrm{Pr}_{\mathcal{H}} \Bigg( \frac{1}{P} \left\| \VEC{W} \right\|^2 \geq (1 + \epsilon) \nu^2 \Bigg)\\
    &\qquad \leq 4 e^{-\frac{P}{2} \left( \sqrt{2 \epsilon + 1} - 1\right)^2}.
\end{align*}
Next, by a union bound over all $\VEC{m} \in \mathbb{Z}_q^n$, we have 
\begin{equation*}
    \mathrm{Pr}_{\mathcal{H}} (\widehat{\VEC{k}} = \VEC{k}) \leq 4 N e^{-\frac{P}{2} \left( \sqrt{2 \epsilon + 1} - 1\right)^2}.
\end{equation*}
Lastly, we can write
\begin{align*}
    \mathrm{Pr} (\widehat{\VEC{k}} &= \VEC{k}) \leq \mathrm{Pr} (\widehat{\VEC{k}} = \VEC{k} | \mathcal{H}) + \mathrm{Pr}(\mathcal{H}^\mathrm{c})\\
    &\leq 4 N e^{-\frac{P}{2} \left( \sqrt{2 \epsilon + 1} - 1\right)^2} + 4 N^2 e^{-\frac{P}{8}\left( 1 - \sqrt{\frac{16 (1+\epsilon) \nu^2} {\rho^2} } \right)^2}.
\end{align*}
\end{proof}

\begin{lemma}
The singleton search error probability of the sub-linear time robust bin detection is upper bounded as
    \begin{equation*}
        \mathrm{Pr} (\widehat{\VEC{k}} \neq \VEC{k}) \leq 2 n e^{-\frac{1}{2 q} \epsilon^2 P_1}
    \end{equation*}
for some constant $\epsilon > 0$.
\label{sublinear_identification_error}
\end{lemma}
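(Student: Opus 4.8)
The plan is to reduce $\{\widehat{\VEC k}\neq\VEC k\}$ to a union of $n$ per‑coordinate failure events and to control each of them by a Chernoff/Hoeffding bound applied to the i.i.d.\ ``votes'' produced by the modulated offsets. Fix a coordinate $r\in[n]$ and recall that the sub‑linear detector sets $\hat k[r]$ to the plurality winner among the values $\text{arg}_q[U_{p,r}/U_{p}]$, $p\in[P_1]$. By Proposition~\ref{prop:quant}, for fixed $r$ we have $\text{arg}_q[U_{p,r}/U_{p}]=\langle\VEC e_r,\VEC k\rangle\oplus_q Z_{p,r}$, where $Z_{p,r}$ takes values in $\mathbb Z_q$ with $\mathrm{Pr}(Z_{p,r}=0)=p_0$, $\mathrm{Pr}(Z_{p,r}=i)=p_i<p_0$ for $i\neq 0$, and $\sum_{i\neq 0}p_i\le\mathbb{P}_e=2e^{-\frac{\zeta}{2}\mathrm{SNR}}$. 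Because distinct offsets are queried with fresh i.i.d.\ oracle noise and the base offsets $\VEC d_p$ are themselves drawn independently across $p$, the pairs $(U_p,U_{p,r})$ — hence the quantization errors $Z_{p,r}$ — are i.i.d.\ across $p$ for each fixed $r$. (The same $U_p$ is reused for every $r$, so the $Z_{p,r}$ are correlated across $r$ for a fixed $p$; this is harmless, since the final step is only a union bound over $r$.)

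Next I would argue that a single‑coordinate error forces an atypical vote tally. Writing $N_a:=\#\{p\in[P_1]:\text{arg}_q[U_{p,r}/U_{p}]=a\}$, correctness at coordinate $r$ holds whenever $N_{k[r]}>P_1/2$, since $\sum_a N_a=P_1$ then precludes any competitor from tying or beating the true count; hence $\{\hat k[r]\neq k[r]\}\subseteq\{\sum_{p}\mathds 1\{Z_{p,r}\neq 0\}\ge P_1/2\}$. The right‑hand event concerns a sum of $P_1$ i.i.d.\ Bernoulli variables, each with mean at most $\mathbb{P}_e$. Provided the SNR is a large enough absolute constant that $\mathbb{P}_e\le \frac12-\epsilon$ for some constant $\epsilon>0$ — which is exactly where the ``some constant $\epsilon>0$'' of the statement enters, and which holds once $\mathrm{SNR}>\tfrac{2\ln 4}{\zeta}$ — a Chernoff/Hoeffding bound gives $\mathrm{Pr}\big(\sum_p\mathds 1\{Z_{p,r}\neq 0\}\ge P_1/2\big)\le 2e^{-\frac{1}{2q}\epsilon^2 P_1}$ (any exponent of the form $c\epsilon^2 P_1$ works; the stated $\tfrac{1}{2q}$ follows from deliberately loose constant bookkeeping, and the factor $2$ propagates from $\mathbb{P}_e$). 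A union bound over the $n$ coordinates then yields $\mathrm{Pr}(\widehat{\VEC k}\neq\VEC k)\le\sum_{r=1}^n\mathrm{Pr}(\hat k[r]\neq k[r])\le 2n\,e^{-\frac{1}{2q}\epsilon^2 P_1}$, as claimed. An equivalent slightly sharper route replaces the ``$N_{k[r]}>P_1/2$'' reduction by a union bound over the $q-1$ competitors $i\neq 0$ of $\mathrm{Pr}(N_i\ge N_{k[r]})$, each bounded by Hoeffding on $\sum_p(\mathds 1\{Z_{p,r}=i\}-\mathds 1\{Z_{p,r}=0\})$, a sum of $[-1,1]$‑valued i.i.d.\ variables with negative mean $p_i-p_0\le-(1-2\mathbb{P}_e)$.

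The only genuinely delicate point is the independence structure used in the concentration step: one must confirm that, for fixed $r$, the $Z_{p,r}$ are independent across $p$ (which follows from the fresh noise on distinct offsets and the i.i.d.\ choice of $\VEC d_p$), while noting that no independence across $r$ is needed because that axis is handled by a union bound. Everything else is the routine translation of Proposition~\ref{prop:quant} into a tail bound, so I would expect the write‑up to be short. \qed
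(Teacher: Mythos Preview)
Your proposal is correct, but it diverges from the paper's argument at the per-coordinate step. Both arguments share the outer structure: invoke Proposition~\ref{prop:quant} to write $\text{arg}_q[U_{p,r}/U_p]=k[r]\oplus_q Z_{p,r}$, bound $\mathrm{Pr}(\hat k[r]\neq k[r])$ for a fixed $r$, then union-bound over the $n$ coordinates. The difference is in how the per-coordinate tail is obtained. You reduce $\{\hat k[r]\neq k[r]\}$ to the event that at least half of the $P_1$ votes are wrong and apply Hoeffding to $\sum_p\mathds 1\{Z_{p,r}\neq 0\}$, which yields an exponent of order $\epsilon^2 P_1$ with a constant that is actually better than $1/(2q)$; you then relax it to match the statement. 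The paper instead lets $\widehat{\VEC p}$ be the empirical type of $(Z_{p,r})_{p\in[P_1]}$, observes that a plurality error forces $\|\widehat{\VEC p}-\VEC p\|_1\ge\epsilon$ (with $\epsilon$ chosen so that $p_0-p_i\ge\epsilon$ for all $i\neq 0$), and then applies the $L_1$ empirical-distribution deviation inequality of Weissman et~al.\ to get $\mathrm{Pr}(\|\widehat{\VEC p}-\VEC p\|_1\ge\epsilon)\le 2e^{-\frac{1}{2q}\epsilon^2 P_1}$ directly. Your route is more elementary and self-contained (no external lemma needed), while the paper's route explains where the specific constant $1/(2q)$ and the prefactor $2$ come from rather than obtaining them by loosening a sharper bound. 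Both require the same underlying SNR condition so that the gap $p_0-\max_{i\neq 0}p_i$ (equivalently $1-2\mathbb P_e$) is bounded away from zero; your ``alternative slightly sharper route'' via a union bound over the $q-1$ competitors is in fact the closest in spirit to the paper's $L_1$ reduction.
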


\begin{proof}
Recall that $\text{arg}_q[U_{p,r} / U_{p}] = k[r] \oplus_q Z_{p,r}$ where $Z_{p,r}$ is a random variable over $\mathbb{Z}_q$ with parameters $(p_0, p_1, \dots, p_{q-1})$ such that $p_0 - p_i \geq \epsilon$ for all $i\neq0$ for some constant $\epsilon > 0$. Then, the error probability of the singleton search for the $r$-th symbol of $\VEC{k}$ is
\begin{align*}
    \mathrm{Pr} &(\widehat{k}[r] \neq k[r]) \\ 
    &= \mathrm{Pr}  \Big( \arg \max_{a \in \mathbb{Z}_q} \sum_{p \in [P_1]} \mathbb{1}\{\text{arg}_q[U_{p,r} / U_{p}] =  a\} \neq k[r] \Big)\\
    &= \mathrm{Pr}  \Big( \arg \max_{a \in \mathbb{Z}_q} \sum_{p \in [P_1]} \mathbb{1}\{Z_{p,r} = a \} \neq 0 \Big)\\
    &\leq \mathrm{Pr}  \Big( \|\VEC{p} - \widehat{\VEC{p}}\|_1 \geq \epsilon \Big)\\
    &\leq 2 e^{-\frac{1}{2 q} \epsilon^2 P_1}.
\end{align*}
where $\VEC{p} = [p_0, p_1, \dots, p_{q-1}]^\mathrm{T}$ is a vector, $\widehat{\VEC{p}}$ is a random vector with distribution $\frac{1}{n} \textrm{Multinomial}(P_1, \VEC{p})$ and the last step uses the result from \cite{weissman2003}. By union bounding over all $n$ symbols, we have the final result.
\end{proof}

\newpage
\bibliographystyle{IEEEtran}
\bibliography{IEEEabrv,references}

% Generated by IEEEtran.bst, version: 1.14 (2015/08/26)
\begin{thebibliography}{10}
\providecommand{\url}[1]{#1}
\csname url@samestyle\endcsname
\providecommand{\newblock}{\relax}
\providecommand{\bibinfo}[2]{#2}
\providecommand{\BIBentrySTDinterwordspacing}{\spaceskip=0pt\relax}
\providecommand{\BIBentryALTinterwordstretchfactor}{4}
\providecommand{\BIBentryALTinterwordspacing}{\spaceskip=\fontdimen2\font plus
\BIBentryALTinterwordstretchfactor\fontdimen3\font minus
  \fontdimen4\font\relax}
\providecommand{\BIBforeignlanguage}[2]{{%
\expandafter\ifx\csname l@#1\endcsname\relax
\typeout{** WARNING: IEEEtran.bst: No hyphenation pattern has been}%
\typeout{** loaded for the language `#1'. Using the pattern for}%
\typeout{** the default language instead.}%
\else
\language=\csname l@#1\endcsname
\fi
#2}}
\providecommand{\BIBdecl}{\relax}
\BIBdecl

\bibitem{Odonnell2021}
R.~O'Donnell, \emph{{Analysis of Boolean Functions}}.\hskip 1em plus 0.5em
  minus 0.4em\relax Cambridge, UK: Cambridge University Press, 2014.

\bibitem{bockmayr1991logic}
A.~Bockmayr, ``Logic programming with pseudo-{B}oolean constraints,''
  \emph{Max-Planck-Institut f{\"u}r Informatik}, 1991.

\bibitem{Kargupta2001}
H.~Kargupta and B.-h. Park, ``Gene expression and fast construction of
  distributed evolutionary representation,'' \emph{Evol. Comput.}, vol.~9,
  no.~1, p. 43–69, Jan. 2001.

\bibitem{Marichal2000}
J.-L. Marichal, ``The influence of variables on pseudo-{Boolean} functions with
  applications to game theory and multicriteria decision making,''
  \emph{Discrete Appl. Math.}, vol. 107, no. 1-3, pp. 139--164, Dec. 2000.

\bibitem{aghazadeh2021epistatic}
A.~Aghazadeh, H.~Nisonoff, O.~Ocal, D.~Brookes, Y.~Huang, O.~Koyluoglu,
  J.~Listgarten, and K.~Ramchandran, ``Epistatic net allows the sparse spectral
  regularization of deep neural networks for inferring fitness functions,''
  \emph{Nature Comm.}, vol.~12, no.~1, pp. 1--10, Sep. 2021.

\bibitem{ha2021adaptive}
W.~Ha, C.~Singh, F.~Lanusse, S.~Upadhyayula, and B.~Yu, ``Adaptive wavelet
  distillation from neural networks through interpretations,'' \emph{Adv.
  Neural Inf. Proc. Sys (NeurIPS)}, vol.~34, Dec. 2021.

\bibitem{Mansour1994}
Y.~Mansour, \emph{Learning Boolean Functions via the Fourier Transform}.\hskip
  1em plus 0.5em minus 0.4em\relax Boston, MA: Springer US, 1994, pp. 391--424.

\bibitem{tibshirani1996regression}
R.~Tibshirani, ``Regression shrinkage and selection via the {LASSO},'' \emph{J.
  Royal Stat. Soc.: Ser. B (Methodological)}, vol.~58, no.~1, pp. 267--288,
  1996.

\bibitem{stobbe2012}
P.~Stobbe and A.~Krause, ``Learning {Fourier} sparse set functions,'' in
  \emph{Art. Intell. and Stat.}, Dec. 2012, pp. 1125--1133.

\bibitem{Scheibler2013}
R.~Scheibler, S.~Haghighatshoar, and M.~Vetterli, ``A fast {Hadamard} transform
  for signals with sub-linear sparsity,'' in \emph{Allerton Conf. on Comm.,
  Control, and Comp.}, Allerton, CA, Oct. 2013, pp. 1250--1257.

\bibitem{Li2015}
\BIBentryALTinterwordspacing
X.~Li, J.~K. Bradley, S.~Pawar, and K.~Ramchandran, ``{SPRIGHT}: A fast and
  robust framework for sparse {Walsh-Hadamard} transform,'' Aug. 2015.
  [Online]. Available: \url{{https://arxiv.org/abs/1508.06336}}
\BIBentrySTDinterwordspacing

\bibitem{Amrollahi2019}
A.~Amrollahi, A.~Zandieh, M.~Kapralov, and A.~Krause, ``Efficiently learning
  {F}ourier sparse set functions,'' in \emph{Adv. Neural Inf. Proc. Sys
  (NeurIPS)}, vol.~32, Vancouver, Canada, Dec. 2019.

\bibitem{Luby2001}
M.~Luby, M.~Mitzenmacher, M.~Shokrollahi, and D.~Spielman, ``Efficient erasure
  correcting codes,'' \emph{{IEEE} Trans. Inf. Theory}, vol.~47, no.~2, pp.
  569--584, Feb. 2001.

\bibitem{Gelman2021}
S.~Gelman, S.~A. Fahlberg, P.~Heinzelman, P.~A. Romero, and A.~Gitter, ``Neural
  networks to learn protein sequence--function relationships from deep
  mutational scanning data,'' \emph{Proc. Nat. Acad. Sci.}, vol. 118, no.~48,
  Nov. 2021.

\bibitem{brookes2022sparsity}
D.~Brookes, A.~Aghazadeh, and J.~Listgarten, ``On the sparsity of fitness
  functions and implications for learning,'' \emph{Proc. Nat. Acad. Sci.}, vol.
  119, no.~1, Nov. 2022.

\bibitem{Ocal2019}
O.~Ocal, S.~Kadhe, and K.~Ramchandran, ``Low-degree pseudo-{B}oolean function
  recovery using codes,'' in \emph{IEEE Int. Symp. Inf. Theory. (ISIT)}, Paris,
  France, Jul. 2019, pp. 1207--1211.

\bibitem{Choi2011}
S.-S. Choi, K.~Jung, and J.~H. Kim, ``Almost tight upper bound for finding
  {F}ourier coefficients of bounded pseudo-{B}oolean functions,'' \emph{J. of
  Comp. Sys. Sci.}, no.~6, pp. 1039--1053, Nov. 2011.

\bibitem{Hassanieh2012}
H.~Hassanieh, P.~Indyk, D.~Katabi, and E.~Price, ``Nearly optimal sparse
  fourier transform,'' in \emph{Proc. ACM Symp. Theory Comp. (STOC)}, New York,
  NY, 2012, p. 563–578.

\bibitem{Pawar2018}
S.~Pawar and K.~Ramchandran, ``R-{FFAST}: A robust sub-linear time algorithm
  for computing a sparse {DFT},'' \emph{{IEEE} Trans. Inf. Theory}, vol.~64,
  no.~1, pp. 451--466, Jan. 2018.

\bibitem{Donoho2006}
D.~Donoho, ``Compressed sensing,'' \emph{{IEEE} Trans. Inf. Theory}, vol.~52,
  no.~4, pp. 1289--1306, Apr. 2006.

\bibitem{candes2006stable}
E.~Candes, J.~Romberg, and T.~Tao, ``Stable signal recovery from incomplete and
  inaccurate measurements,'' \emph{{Comm. Pure Appl. Math}}, vol.~59, no.~8,
  pp. 1207--1223, Mar. 2006.

\bibitem{Aldridge2019}
M.~Aldridge, O.~Johnson, and J.~Scarlett, ``Group testing: An information
  theory perspective,'' \emph{Foundations and Trends{\textregistered} in
  Communications and Information Theory}, vol.~15, no. 3-4, pp. 196--392, 2019.

\bibitem{donoho2009message}
D.~Donoho, A.~Maleki, and A.~Montanari, ``Message-passing algorithms for
  compressed sensing,'' \emph{Proc. Nat. Acad. Sci.}, vol. 106, no.~45, pp.
  18\,914--18\,919, Nov. 2009.

\bibitem{beck2009fast}
A.~Beck and M.~Teboulle, ``A fast iterative shrinkage-thresholding algorithm
  for linear inverse problems,'' \emph{SIAM J. Imag. Sci.}, vol.~2, no.~1, pp.
  183--202, Mar. 2009.

\bibitem{Blahut1983}
R.~E. Blahut, \emph{\BIBforeignlanguage{eng}{{Theory and Practice of Error
  Control Codes}}}.\hskip 1em plus 0.5em minus 0.4em\relax Boston, MA:
  Addison-Wesley, 1983.

\bibitem{yuan2006model}
M.~Yuan and Y.~Lin, ``Model selection and estimation in regression with grouped
  variables,'' \emph{J. Royal Stat. Soc.: Ser. B (Methodological)}, vol.~68,
  no.~1, pp. 49--67, 2006.

\bibitem{hoerl1970ridge}
A.~E. Hoerl and R.~W. Kennard, ``Ridge regression: Biased estimation for
  nonorthogonal problems,'' \emph{Technometrics}, vol.~12, no.~1, pp. 55--67,
  1970.

\bibitem{Lorenz2011}
R.~Lorenz, S.~H. Bernhart, C.~H{\"o}ner~zu Siederdissen, H.~Tafer, C.~Flamm,
  P.~F. Stadler, and I.~L. Hofacker, ``{ViennaRNA} package 2.0,'' \emph{Alg.
  for Molecular Bio.}, no.~1, p.~26, 2011.

\bibitem{Jumper2021}
J.~Jumper, R.~Evans, A.~Pritzel, T.~Green, M.~Figurnov, O.~Ronneberger,
  K.~Tunyasuvunakool, R.~Bates, A.~{\v Z}{\'\i}dek, A.~Potapenko, A.~Bridgland,
  C.~Meyer, S.~A.~A. Kohl, A.~J. Ballard, A.~Cowie, B.~Romera-Paredes,
  S.~Nikolov, R.~Jain, J.~Adler, T.~Back, S.~Petersen, D.~Reiman, E.~Clancy,
  M.~Zielinski, M.~Steinegger, M.~Pacholska, T.~Berghammer, S.~Bodenstein,
  D.~Silver, O.~Vinyals, A.~W. Senior, K.~Kavukcuoglu, P.~Kohli, and
  D.~Hassabis, ``Highly accurate protein structure prediction with
  {AlphaFold},'' \emph{Nature}, vol. 596, no. 7873, pp. 583--589, 2021.

\bibitem{Lin2022}
\BIBentryALTinterwordspacing
Z.~Lin, H.~Akin, R.~Rao, B.~Hie, Z.~Zhu, W.~Lu, N.~Smetanin, R.~Verkuil,
  O.~Kabeli, Y.~Shmueli, A.~dos Santos~Costa, M.~Fazel-Zarandi, T.~Sercu,
  S.~Candido, and A.~Rives, ``Evolutionary-scale prediction of atomic level
  protein structure with a language model,'' 2022. [Online]. Available:
  \url{https://www.biorxiv.org/content/early/2022/10/31/2022.07.20.500902}
\BIBentrySTDinterwordspacing

\bibitem{Blake1972}
I.~F. Blake, ``Codes over certain rings,'' \emph{Inf. and Control}, vol.~20,
  no.~4, pp. 396--404, 1972.

\bibitem{Shankar1979}
P.~Shankar, ``On {BCH} codes over arbitrary integer rings (corresp.),''
  \emph{{IEEE} Trans. Inf. Theory}, vol.~25, no.~4, pp. 480--483, Apr. 1979.

\bibitem{Richardson2001}
T.~Richardson and R.~Urbanke, ``The capacity of low-density parity-check codes
  under message-passing decoding,'' \emph{{IEEE} Trans. Inf. Theory}, vol.~47,
  no.~2, pp. 599--618, Feb. 2001.

\bibitem{weissman2003}
T.~Weissman, E.~Ordentlich, G.~Seroussi, S.~Verdu, and M.~J. Weinberger,
  ``Inequalities for the {L1} deviation of the empirical distribution,''
  \emph{Tech. Rep. HPL-2003-97R1, Hewlett-Packard Labs}, 2003.

\end{thebibliography}

\end{document}